\documentclass[12pt]{article}
\usepackage{longtable,supertabular}
\usepackage{amsmath}
\usepackage{amssymb}
\usepackage{amsthm}
\usepackage{latexsym}
\usepackage{cite}
\usepackage{a4,color}

\newcommand{\bC}{{\mathbf{{C}}}}
\newcommand{\bF}{{\mathbf{{F}}}}
\newcommand{\bT}{{\mathbf{{T}}}}
\newcommand{\bZ}{{\mathbf{{Z}}}}

\newcommand{\bu}{{\mathbf{u}}}
\newcommand{\bv}{{\mathbf{v}}}

\newcommand{\ord}{{\mathrm{ord}}}
\newcommand{\lcm}{{\mathrm{lcm}}}
\newcommand{\wt}{{\mathrm{wt}}}
\newcommand{\m}{{\mathsf{M}}}
\newcommand{\bzero}{{\mathbf{0}}}

\newcommand{\Plotkin}{{\mathrm{Plotkin}}}

\newtheorem{theorem}{Theorem}
\newtheorem{lemma}[theorem]{Lemma}

\newtheorem{problem}{Open Problem}

\newtheorem{example}{Example}

\title{Self-Dual Cyclic Codes with Square-Root-Like Lower Bounds on Their Minimum Distances\thanks{The research of Hao Chen was supported by NSFC Grant 62032009. The research of Cunsheng Ding was supported by The Hong Kong Research Grants Council under Grant No. 16301123.}}
\author{Hao Chen\thanks{H. Chen is with the College of Information Science and Technology/Cyber Security, Jinan University, Guangzhou, Guangdong Province, 510632, China (e-mail: haochen@jnu.edu.cn).}
\and Cunsheng Ding\thanks{C. Ding is with the Department of Computer Science and Engineering, The Hong Kong University of Science and Technology, Hong Kong, China (cding@ust.hk).}}

\begin{document}

\maketitle

\begin{abstract}
Binary self-dual cyclic codes have been studied since the classical work of Sloane and Thompson published in IEEE Trans. Inf. Theory, vol. 29, 1983.   Twenty five years later, an infinite family of binary self-dual cyclic codes with lengths $n_i$ and minimum distances $d_i \geq \frac{1}{2} \sqrt{n_i+2}$ was presented in a paper of IEEE Trans. Inf. Theory, vol. 55, 2009. However, no infinite family of Euclidean self-dual binary 
cyclic codes whose minimum distances have the square-root lower bound 
and no infinite family of Euclidean self-dual nonbinary 
cyclic codes whose minimum distances have a lower bound better than the square-root lower bound are known in the literature.  
In this paper, an infinite family of Euclidean 
self-dual cyclic codes over the fields ${\bf F}_{2^s}$ with a square-root-like lower bound 
is constructed. 
An infinite subfamily of this family consists of self-dual binary cyclic codes with the square-root lower bound. 
Another infinite subfamily of this family consists of self-dual cyclic codes over the fields ${\bf F}_{2^s}$ 
with a lower bound better than 
the square-root bound for $s \geq 2$.  Consequently, two breakthroughs in coding theory are made in this paper. 
An infinite family of self-dual binary cyclic codes with a square-root-like lower bound is also 
presented in this paper. 
An infinite family of Hermitian self-dual  cyclic codes over the fields ${\bf F}_{2^{2s}}$
with a square-root-like lower bound and an infinite family of Euclidean self-dual linear codes over 
${\bf F}_{q}$ with $q \equiv 1 \pmod{4}$ with a square-root-like lower bound are also constructed in this paper.  \\

{\bf Index terms:} Cyclic code, linear code, self-dual code.  
\end{abstract}

\section{Introduction}

\subsection{Basics of linear codes}

The \emph{Hamming weight}  of a vector ${\bf a}=(a_0, \ldots, a_{n-1}) \in {\bf F}_q^n$ is defined by
$$\wt({\bf a})=|\{i: a_i \neq 0\}|.$$
The \emph{Hamming distance} $d({\bf a}, {\bf b})$ between two vectors ${\bf a}$ and ${\bf b}$ is defined by
$$d({\bf a}, {\bf b})=\wt({\bf a}-{\bf b}).$$
The \emph{Hamming distance} $d({\bf C})$ of a subset ${\bf C} \subseteq {\bf F}_q$ is defied by
$$d({\bf C})=\min_{{\bf a} \neq {\bf b}} \{d({\bf a}, {\bf b}) :  {\bf a} \in {\bf C}, {\bf b} \in {\bf C} \}.$$
An $[n, k, d]_q$ \emph{linear code} is a linear subspace of ${\bf F}_q^n$ with dimension $k$ and
minimum distance $d$. Throughout this paper, by an $[n, k, d]_q$ code we mean an $[n, k, d]_q$ linear code
over $\bF_q$.
It is clear that the minimum Hamming distance of a nonzero linear code is its minimum nonzero Hamming weight.   The Singleton bound asserts that $d \leq n-k+1$ for an $[n, k, d]_q$ code. A linear code attaining this bound is said to be  maximal distance separable (MDS). Reed-Solomon codes are well-known MDS codes (see, e.g.,  \cite{HP,Lint,MScode}). \\

The \emph{Euclidean inner product} on ${\bf F}_q^n$ is defined by $$<{\bf x}, {\bf y}>=\Sigma_{i=1}^n x_iy_i,$$ where ${\bf x}=(x_1, \ldots, x_n)$ and ${\bf y}=(y_1, \ldots, y_n)$. The \emph{Euclidean dual} of a linear code ${\bf C}\subseteq {\bf F}_q^n$ is $${\bf C}^{\perp}=\{{\bf c} \in {\bf F}_q^n: <{\bf c}, {\bf y}>=0, \forall {\bf y} \in {\bf C}\}.$$ A linear code ${\bf C} \subseteq {\bf F}_q^n$ is \emph{Euclidean self-orthogonal} if ${\bf C} \subseteq {\bf C}^{\perp}$, \emph{Euclidean dual-containing} if ${\bf C}^{\perp} \subseteq {\bf C}$, and \emph{Euclidean  self-dual} if ${\bf C}={\bf C}^{\perp}$. The Euclidean dual of a Euclidean dual-containing code is a Euclidean self-orthogonal code.

Let $q=q_1^2$.
Similarly the \emph{Hermitian inner product} on ${\bf F}_{q}^n$ is defined by $$<{\bf x}, {\bf y}>_H=\Sigma_{i=1}^n x_iy_i^{q_1},$$ where ${\bf x}=(x_1, \ldots, x_n)$ and ${\bf y}=(y_1, \ldots, y_n)$ are two vectors in ${\bf F}_{q}^n$. The \emph{Hermitian dual} of a linear code ${\bf C}\subseteq {\bf F}_{q}^n$ is $${\bf C}^{\perp_H}=\{{\bf c} \in {\bf F}_{q}^n: <{\bf c}, {\bf y}>_H=0, \forall {\bf y} \in {\bf C}\}.$$ For a linear code ${\bf C} \subseteq {\bf F}_{q}^n$,
we set $${\bf C}^{q_1}=\{(c_0^{q_1}, \ldots, c_{n-1}^{q_1}): (c_0, \ldots, c_{n-1}) \in {\bf C}\}.$$
Then it is clear that
\begin{eqnarray}\label{eqn-EHrelation}
{\bf C}^{\perp_H}=({\bf C}^{\perp})^{q_1}=({\bf C}^{q_1})^{\perp}.
\end{eqnarray}
A linear code ${\bf C} \subseteq {\bf F}_{q}^n$ is \emph{Hermitian self-orthogonal} if ${\bf C} \subseteq {\bf C}^{\perp_H}$, \emph{Hermitian dual-containing} if ${\bf C}^{\perp_H} \subseteq {\bf C}$, and \emph{Hermitian self-dual} if ${\bf C}={\bf C}^{\perp_H}$. The Hermitian dual of a Hermitian dual-containing code is a Hermitian self-orthogonal code.

\subsection{The importance of cyclic codes in theory and practice}

Let ${\bf C} \subseteq {\bf F}_q^n$ be a linear code. If $(c_0, c_1, \ldots, c_{n-1}) \in {\bf C}$ implies
$(c_{n-1}, c_0,  \ldots, c_{n-2})$ $ \in {\bf C}$, then this code ${\bf C} \subseteq {\bf F}_q^n$ is said to be
\textit{cyclic}. A codeword ${\bf c}$ in a cyclic code  ${\bf C} \subseteq {\bf F}_q^n$ is identified with the polynomial ${\bf c}(x)=c_0+c_1x+\cdots+c_{n-1}x^{n-1}\in {\bf F}_q[x]/(x^n-1)$. With this identification,
every cyclic code  ${\bf C} \subseteq {\bf F}_q^n$ is a principal ideal in the ring ${\bf F}_q[x]/(x^n-1)$ and is then generated by many polynomials in ${\bf F}_q[x]/(x^n-1)$. Let ${\bf C}=(g(x))$, where $g(x)$ is the monic polynomial with the lowest degree,
then $g(x)$ must be a divisor of $x^n-1$ and is called the \emph{generator polynomial} of ${\bf C}$, and $h(x)=\frac{x^n-1}{g(x)}$ is referred to as
the \emph{check polynomial} of ${\bf C}$.
The dual code ${\bf C}^\perp$ of a cyclic code ${\bf C}$ with generator polynomial $g(x)$ is the cyclic code with the generator polynomial $h^{\perp}(x)=\frac{x^k h(x^{-1})}{h(0)}$, which is the reciprocal polynomial of the check polynomial $h(x)$ of ${\bf C}$. \\

Cyclic codes, in particular, the Bose-Chaudhuri-Hocquenghem (BCH) codes introduced in 1959-1960 (see \cite{BC1,BC2,Hoc}), are the most important type of linear codes in coding theory and practice due to the
following:
\begin{itemize}
\item They have the cyclic structure and the efficient Berlekamp-Massey decoding algorithm.
\item They (in particular, the cyclic Reed-Solomon codes) are widely used in communications and data storage  systems.
\item They are closely related to combinatorics, number theory, group theory and many other areas of mathematics.
\end{itemize}
For similar reasons, the Euclidean and Hermitian self-dual cyclic codes are more interesting than other
Euclidean and Hermitian self-dual non-cyclic linear codes. This explains why we are more interested in
Euclidean and Hermitian self-dual cyclic codes.
\\

The construction and decoding of cyclic and BCH codes have been extensively studied
in the literature (see \cite[Chapters 3, 7, 8, 9]{MScode}, \cite[Chapters 4, 5, 6]{HP},
and \cite[Chapter 6]{Lint}). The total number of references on cyclic codes in the literature is huge and this shows the importance of cyclic codes in another sense.

\subsection{The motivations and objectives of this paper}\label{sec-motivation}

A lower bound on
the minimum distance of an $[n,k,d]_q$ code is called  the \emph{square-root lower bound} if it is
$\sqrt{n}$, and
a \emph{square-root-like lower bound} if it is at least
$c\sqrt{n}$ for a fixed positive constant $c$ and fixed $q$. \\

Self-dual codes have been a very hot and interesting topic partially due to the following:
\begin{itemize}
\item Some of them support combinatorial $5$-designs. For example, the extended binary and ternary Golay codes and some Pless symmetry codes are such codes \cite{dingtang2022,HP}.
\item They are related to invariant theory \cite{Rains}.
\item They have applications in quantum error-correcting codes \cite{Rains}.
\end{itemize}
To the best knowledge of the authors, known infinite families of self-dual codes with unbounded length $n$ and minimum distance $d\geq \sqrt{n}$ are the following:
\begin{enumerate}
\item The extended codes of odd-like quadratic residue codes  \cite{HP}, which are not cyclic.
\item The Pless symmetry codes \cite{Pless1972,dingtang2022}, which are not cyclic.
\item The generalized Reed-Muller code of order $[m(q-1)-1]/2$ over ${\bf F}_q$ with parameters $[q^m, q^m/2, [(q+2)/2]q^{(m-1)/2}]$, where $m$ is odd and $q=2^s$ with a positive integer $s$ \cite[Theorem 5.8, Theorem 5.25]{AK98}, which are not cyclic. 	
\item A family of negacyclic codes over ${\bf F}_q$ with parameters (see \cite{SunDW})
$$\left[\frac{q^m-1}2, \frac{q^m-1}4, d  \right ],$$
where $q$ is an odd prime power, $m\geq 2$ is an integer, $q^m\equiv 1\pmod{4}$ and
\begin{align*}
	d=\begin{cases}
	q^{m/2}~&{\rm if}~m~{\rm is~even},\\
	(\frac{q+3}{4})q^{(m-1)/2}~&{\rm if}~m~{\rm is~odd}.
\end{cases}
\end{align*}
\end{enumerate}
Notice that all the families of self-dual codes above are not cyclic. Hence, no infinite family of self-dual cyclic
codes whose minimum distances are better than the square-root bound are known in the literature. The main  motivation of this paper is the lack of such an infinite family of self-dual cyclic codes. \\

While it is hard to construct an infinite family of self-dual cyclic codes with the square-root lower bound,
it is natural for people to construct an infinite family of self-dual cyclic codes with a square-root-like lower bound.
In 2009, an important progress in this direction was made and an infinite family of binary self-dual cyclic codes
with length $n=2(2^{2a+1}-1)$ and minimum distance $\delta \geq \frac{1}{2} \sqrt{n+2}$ were constructed
by Heijne and Top  \cite{HTop}. Constructing infinite families of self-dual cyclic codes with
a  square-root-like lower bound is not easy and is thus the second motivation of this paper.  \\

The objectives of this paper are the following:
\begin{itemize}
\item Generalize earlier results on the construction of binary cyclic codes.
\item Present a new method for constructing self-dual linear codes.
\item Construct infinite families of Euclidean and Hermitian self-dual cyclic codes with a square-root-like lower bound.
\item Construct infinite families of Euclidean self-dual linear  codes with a square-root-like lower bound.
\end{itemize}

\subsection{The organization of this paper}

The rest of this paper is arranged as follows.
Section \ref{sec-evanlint} generalises the van Lint construction of the repeated-root binary cyclic code of length $2n$ for odd $n$, and develops a generalised van Lint construction.
Section \ref{sec-constructs} presents a $[{\bf u}|{\bf u}+{\bf v}]$ construction of Euclidean and Hermitian self-dual cyclic codes over $\bF_{2^s}$ via the generalised
van Lint theorem.
Section \ref{sec-BCHcontain} recalls two families of dual-containing BCH codes.
Section \ref{sec-code1}
constructs families of Euclidean and Hermitian self-dual cyclic codes with a
square-root-like lower bound on their minimum distances.
Section \ref{sec-code2} constructs a family of Euclidean self-dual linear codes over
$\bF_q$ for $q \equiv 1 \pmod{4}$ with a
square-root-like lower bound on their minimum distances.
Section \ref{sec-Oct16} presents a family of self-dual binary cyclic codes with a square-root-like lower bound.
Section \ref{sec-othercodes} provides information on some self-dual linear codes with a square-root-like lower bound.
Section \ref{sec-final} summarises the contributions of this paper and makes some concluding remarks.

\section{A generalized  van Lint theorem}\label{sec-evanlint}

Let $\bC_1$ and $\bC_2$ be two linear codes with parameters $[n,k_1, d_1]_q$ and $[n,k_2, d_2]_q$, respectively.
The Plotkin sum of $\bC_1$ and $\bC_2$ is denoted by $\Plotkin(\bC_1, \bC_2)$ and defined by
\begin{eqnarray*}
\Plotkin(\bC_1, \bC_2):=\{(\bu|\bu+\bv):  \bu \in \bC_1, \, \bv \in \bC_2 \},
\end{eqnarray*}
where $\bu | \bv$ denotes the concatenation of the two vectors $\bu$ and $\bv$. It is well known that
$\bC$ has parameters $[2n, k_1+k_2, \min\{2d_1, d_2\}]_q$ (see \cite{MScode}). The Plotkin sum is also called
the  $[{\bf u}|{\bf u}+{\bf v}]$ construction and was introduced in 1960 by Plotkin \cite{Plotkin}. \\

The $[{\bf u}|{\bf u}+{\bf v}]$ construction of binary repeated-root cyclic codes was given by van Lint in his paper \cite{Lint1}.  By modifying the conditions of Theorem 1 in \cite{Lint1}, we have the following theorem, which is a  generalization of the original van Lint theorem.

\begin{theorem}[The generalized van Lint theorem]\label{thm-evanlint}
Let $q$ be a power of $2$ and $n$ be an odd positive integer. Let ${\bf C}_1 \subseteq {\bf F}_q^n$ be a cyclic code
with generator polynomial $g_1(x) \in {\bf F}_q[x]$ and ${\bf C}_2 \subseteq {\bf F}_q^n$ be a cyclic code
generated by the polynomial $g_1(x)g_2(x) \in {\bf F}_q[x]$, where $g_2(x)$ is a divisor of $x^n+1$.
Then the code $\Plotkin(\bC_1, \bC_2)$
is permutation-equivalent to  the repeated-root cyclic code of length $2n$ generated by the polynomial $g_1(x)^2g_2(x)$.
\end{theorem}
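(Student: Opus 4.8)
The plan is to produce an explicit coordinate permutation and verify that it carries one code onto the other, upgrading the resulting inclusion to an equality by a dimension count. First I would record the ambient reductions. Since $q$ is a power of $2$ and $n$ is odd, we have $x^{2n}+1=(x^n+1)^2$, and $x^n+1$ is squarefree because its formal derivative $nx^{n-1}=x^{n-1}$ is coprime to it. As $\bC_1$ and $\bC_2$ are cyclic of length $n$, their generators satisfy $g_1(x)\mid x^n+1$ and $g_1(x)g_2(x)\mid x^n+1$; hence $g_1(x)^2g_2(x)=g_1(x)\cdot g_1(x)g_2(x)$ divides $(x^n+1)^2=x^{2n}+1$, so the repeated-root code $\bC:=(g_1(x)^2g_2(x))$ of length $2n$ is well defined. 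Its dimension is $2n-2\deg g_1-\deg g_2$, which equals $\dim\bC_1+\dim\bC_2=(n-\deg g_1)+(n-\deg g_1-\deg g_2)$, the dimension of $\Plotkin(\bC_1,\bC_2)$; this numerical match is what the final step exploits.

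Next I would translate the Plotkin sum into polynomial language. Identifying a length-$n$ word $\bu$ with $u(x)\in\bF_q[x]/(x^n+1)$, the codeword $(\bu\mid\bu+\bv)$ of length $2n$ corresponds to $w(x)=u(x)(1+x^n)+x^nv(x)$ in $\bF_q[x]/(x^{2n}+1)$, where $g_1\mid u$ and $g_1g_2\mid v$. This already shows why $\Plotkin(\bC_1,\bC_2)$ is not literally the ideal $(g_1^2g_2)$: the term $x^nv(x)$ is only guaranteed divisible by $g_1g_2$, not by $g_1^2g_2$, and the Plotkin sum is not invariant under the length-$2n$ cyclic shift. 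The gap between the single factor $g_1$ appearing here and the squared factor $g_1^2$ in the target is exactly what a permutation must repair.

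The permutation I would use comes from the Chinese Remainder isomorphism $\bZ/2n\bZ\cong\bZ/2\bZ\times\bZ/n\bZ$, valid because $n$ is odd. Relabel the coordinate $i$ by the pair $(i\bmod 2,\,i\bmod n)$, listing first the $n$ positions with $i$ even, then the $n$ positions with $i$ odd, each block ordered by its $\bZ/n\bZ$-component. Under this relabeling the two blocks are the even-indexed and odd-indexed subwords, and writing $w(x)=A(x^2)+xB(x^2)$ exhibits them as elements $A,B$ of $\bF_q[x]/(x^n+1)$; the length-$2n$ cyclic shift becomes the map that advances the $\bZ/n\bZ$-component by one and interchanges the two parity blocks. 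The heart of the argument is to check that this relabeling sends $\Plotkin(\bC_1,\bC_2)$ into $\bC$: one decomposes the spanning elements $g_1(x)(1+x^n)$ and $x^ng_1(x)g_2(x)$ (and the corresponding shifts) through the even/odd splitting and verifies each image is divisible by $g_1^2g_2$. Two features must be tracked. The squaring identity $g_1(x)^2=\tilde g_1(x^2)$, where $\tilde g_1$ carries the Frobenius-squared coefficients of $g_1$, shows the block-level generator is a Frobenius conjugate of $g_1$; and the within-block ordering by $\bZ/n\bZ$-residue is precisely the multiplier $i\mapsto 2^{-1}i$ on length $n$, which realizes that conjugation as a coordinate permutation. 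Reconciling these two effects—so that the squared factor $g_1^2$ matches the two coordinate blocks while $g_2$ governs the coupling between them—is the main obstacle, and the point where both the oddness of $n$ and the characteristic-$2$ hypothesis are essential.

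Finally, having shown the permutation maps $\Plotkin(\bC_1,\bC_2)$ into $\bC$, I would invoke the dimension equality from the first paragraph: a permutation is injective and both codes have dimension $2n-2\deg g_1-\deg g_2$, so the inclusion must be an equality, giving the claimed permutation-equivalence. I expect the bookkeeping of the even/odd decomposition against the repeated-root factorization, rather than the final counting step, to be the only genuinely delicate part.
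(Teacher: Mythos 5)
Your overall strategy --- an explicit even/odd interleaving permutation followed by a dimension count --- is exactly the paper's, and your CRT relabeling $i\mapsto(i\bmod 2,\,i\bmod n)$ is the same permutation the paper writes out as $\bw=(a_0,b_1,a_2,b_3,\ldots)$. However, there are two genuine gaps. First, your opening reduction asserts that $g_1(x)g_2(x)\mid x^n+1$ and hence $g_1(x)^2g_2(x)\mid x^{2n}+1$. This is false in general: the hypothesis is only that $g_2\mid x^n+1$ and that $\bC_2$ is \emph{generated by} $g_1g_2$, not that $g_1g_2$ is its generator polynomial. The actual generator polynomial of $\bC_2$ is $\gcd(x^n+1,g_1g_2)=g_1g_2/f$ with $f=\gcd(g_1,g_2)$, and that of the length-$2n$ code is $g_1^2g_2/f$; both of your dimension formulas are off by $\deg f$ (the two errors happen to cancel in the final equality, but by accident). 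By silently assuming $f=1$ you have reintroduced precisely the coprimality condition of van Lint's original theorem that this generalization is designed to remove. Worse, your verification target ``each image is divisible by $g_1^2g_2$'' is then too strong and would actually fail: take $g_2=g_1$, so that $\bC_2=\bC_1$ and the target cyclic code is generated by $\gcd(x^{2n}+1,g_1^3)=g_1^2$, not $g_1^3$. What one can and should prove is divisibility by $g_1^2$ and by $g_2$ separately, hence by $\lcm(g_1^2,g_2)=g_1^2g_2/f$, which is the correct generator.

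Second, the step you yourself identify as ``the main obstacle'' --- reconciling the even/odd splitting with the squared factor $g_1^2$ --- is only described, not carried out, and the remarks about Frobenius conjugates and the multiplier $i\mapsto 2^{-1}i$ do not pin down the mechanism. The paper's actual argument is short once stated: writing the interleaved word as $\bw(x)=[\ba_{even}(x^2)+x^{n+1}\ba_{odd}(x^2)]+[x\bb_{odd}(x^2)+x^n\bb_{even}(x^2)]$ with $\bb=\ba+\bc$, each bracket differs from $\ba(x)$ (resp. $\bb(x)$) by a multiple of $x^n+1$ and is therefore divisible by $g_1$; since each bracket is supported entirely on even (resp. odd) powers of $x$ and $g_1$ is squarefree and coprime to $x$, divisibility by $g_1$ upgrades automatically to divisibility by $g_1^2$ in characteristic $2$. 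Divisibility by $g_2$ comes from the separate identity $\bw(x)=(x^n+1)\ba(x)+\bc(x)+(x^n+1)\bc_{even}(x^2)$. Without this (or an equivalent) computation, and with the divisibility target corrected as above, the proof is not complete.
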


\begin{proof}
Let $\bC$ denote $\Plotkin(\bC_1, \bC_2)$.
Let $f(x)=\gcd(g_1(x), g_2(x))$. Then $f(x)$ is a divisor of $x^n+1$.
By definition, $\bC_2$ has generator polynomial
$$\gcd(x^n+1, g_1(x)g_2(x))=g_1(x)g_2(x)/f(x)$$
and thus dimension
$$
n-\deg(g_1(x)) -\deg(g_2(x))+\deg(f(x)).
$$
It then follows that
\begin{eqnarray*}
\dim(\bC) &=& \dim(\bC_1) + \dim(\bC_2) \nonumber \\
                 &=& 2n-2\deg(g_1(x)) -\deg(g_2(x))+\deg(f(x)).
\end{eqnarray*}

Let $\bC'$ denote the the repeated-root cyclic code of length $2n$ generated by the polynomial $g_1(x)^2g_2(x)$. By definition, $\bC'$ has generator polynomial
$$
\gcd(x^{2n}+1,g_1(x)^2g_2(x))= \frac{g_1(x)^2g_2(x)}{f(x)},
$$
as $x^n+1$ has no repeated roots due to the assumption that $n$ is odd.
Consequently, $\bC'$ has dimension
$$
2n-2\deg(g_1(x)) -\deg(g_2(x))+\deg(f(x)).
$$
We have then
\begin{eqnarray}\label{eq-maindim}
\dim(\bC)=\dim(\bC').
\end{eqnarray}

Let ${\bf a}=(a_0, \ldots, a_{n-1}) \in {\bf C}_1$ and ${\bf c}=(c_0, \ldots, c_{n-1}) \in {\bf C}_2$. Set ${\bf b}={\bf a}+{\bf c}=(b_0, \ldots, b_{n-1})$. Then ${\bf a}(x)=a_0+a_1x+\cdots+a_{n-1}x^{n-1}={\bf a}_{even}(x^2)+x{\bf a}_{odd}(x^2)$ is divisible by $g_1(x)$. And ${\bf c}(x)=c_0+c_1x+\cdots+c_{n-1}x^{n-1}$ is divisible by $\gcd(x^n+1, g_1(x)g_2(x))$. Therefore ${\bf b}(x)=b_0+b_1x+\cdots+b_{n-1}x^{n-1}={\bf b}_{even}(x^2)+x{\bf b}_{odd}(x^2)$ is divisible by $g_1(x)$.\\

Set $${\bf w}=(a_0, b_1, a_2, b_3, \ldots,b_{n-2}, a_{n-1}, b_0, a_1, \ldots, a_{n-2}, b_{n-1}).$$ Then
$${\bf w}(x)=[{\bf a}_{even}(x^2)+x^{n+1}{\bf a}_{odd}(x^2)]+[x{\bf b}_{odd}(x^2)+x^n{\bf b}_{even}(x^2)].$$ It is clear that $${\bf a}_{even}(x^2)+x^{n+1}{\bf a}_{odd}(x^2)={\bf a}(x)+x(x^n+1){\bf a}_{odd}(x^2)$$ is divisible by $g_1(x)$, since $g_1(x)$ is a divisor of $x^n+1$. Since there is only even degree powers $x^{2h}$'s in ${\bf a}(x)+x(x^n+1){\bf a}_{odd}(x^2)$, this term is divisible by $g_1(x)^2$, from the fact that $g_1(x)$ has no repeated root. The second term $x{\bf b}_{odd}(x^2)+x^n{\bf b}_{even}(x^2)={\bf b}(x)+(x^n+1){\bf b}_{odd}(x^2)$ is divisible by $g_1(x)^2$, from a similar argument.\\

On the other hand, ${\bf w}(x)=(x^n+1){\bf a}(x)+{\bf c}(x)+(x^n+1){\bf c}_{even}(x^2)$ is divisible by $g_2(x)$, then ${\bf w}(x)$ is divisible by $\gcd(x^{2n}+1, g_1(x)^2g_2(x))$. Then the code ${\bf C}$ is permutation-equivalent to the cyclic code $\bC'$ of length $2n$ generated by $g_1(x)^2g_2(x)$, as both codes have the same dimension by \eqref{eq-maindim}. The desired conclusion then follows.
\end{proof}

We have the following remarks regarding Theorem \ref{thm-evanlint}:
\begin{itemize}
\item By definition, $\bC_2$ is a subcode of $\bC_1$ and it is possible that $\bC_2=\bC_1$.
\item The proof of Theorem \ref{thm-evanlint} shows that the cyclic code $\bC'$ has generator polynomial
$$
\gcd(x^{2n}+1,g_1(x)^2g_2(x))= \frac{g_1(x)^2g_2(x)}{f(x)}
$$
 and the codes $\Plotkin(\bC_1, \bC_2)$ and $\bC'$ have dimension
$$2n-2\deg(g_1(x)) -\deg(g_2(x))+\deg(f(x)).$$
\item The code $\Plotkin(\bC_1, \bC_2)$ is Euclidean (Hermitian) self-dual if and only if the cyclic code
$\bC'$ is  Euclidean (Hermitian) self-dual, as the two codes are permutation-equivalent.
\item In the original theorem proved by van Lint (see \cite[Section 6.10]{Lint}), the condition that $\gcd(g_1(x), g_2(x))=1$ is required. Theorem \ref{thm-evanlint} does not have such a condition. Hence,
Theorem \ref{thm-evanlint}  is indeed a generalization of the original van Lint theorem. Even in the case
$q=2$, Theorem \ref{thm-evanlint} extends  the original van Lint theorem to a large extent.
\end{itemize}

\section{The $[{\bf u}|{\bf u}+{\bf v}]$ construction of Euclidean and Hermitian self-dual cyclic codes over $\bF_{2^s}$}\label{sec-constructs}

In this section, we give the main construction of self-dual codes over ${\bf F}_{2^s}$ from dual-containing codes.\\

\begin{theorem}\label{thm-3.1}
Let ${\bf F}_{2^s}$ be the finite field with $2^s$ elements. Suppose that ${\bf C} \subseteq {\bf F}_{2^s}^n$ is a  Euclidean dual-containing linear code. Then the linear code $${\bf C}_1=\{{\bf u}|{\bf u}+{\bf v}: {\bf u} \in {\bf C}, {\bf v} \in {\bf C}^{\perp}\} \subseteq {\bf F}_{2^s}^{2n}$$ is a self-dual code with minimum distance $\min\{d({\bf C}^{\perp}), 2d({\bf C})\}$.

Let ${\bf F}_{2^{2s}}$ be the finite field with $2^{2s}$ elements. Suppose that ${\bf C} \subseteq {\bf F}_{2^{2s}}^n$ is a  Hermitian dual-containing linear code. Then the linear code $${\bf C}_2=\{{\bf u}|{\bf u}+{\bf v}: {\bf u} \in {\bf C}, {\bf v} \in {\bf C}^{\perp_H}\} \subseteq {\bf F}_{2^{2s}}^{2n}$$ is a Hermitian self-dual code with minimum distance
$$
\min\{d({\bf C}^{\perp_H}), 2d({\bf C})\}=\min\{d({\bf C}^{\perp}), 2d({\bf C})\}.
$$
\end{theorem}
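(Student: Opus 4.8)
The plan is to prove the two assertions (self-duality and the minimum distance formula) separately, treating the Euclidean case in full detail and then noting that the Hermitian case follows by the same reasoning together with the relation \eqref{eqn-EHrelation}. Throughout I would write $N=2n$ for the length of the Plotkin sum and work directly from the definition of $\mathbf{C}_1$ as a $[\mathbf{u}|\mathbf{u}+\mathbf{v}]$ construction with $\mathbf{u}\in\mathbf{C}$ and $\mathbf{v}\in\mathbf{C}^\perp$.

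\emph{Self-duality (Euclidean case).} First I would check $\mathbf{C}_1\subseteq\mathbf{C}_1^\perp$, i.e.\ that $\mathbf{C}_1$ is self-orthogonal, and then compare dimensions. Take two codewords $(\mathbf{u}|\mathbf{u}+\mathbf{v})$ and $(\mathbf{u}'|\mathbf{u}'+\mathbf{v}')$ with $\mathbf{u},\mathbf{u}'\in\mathbf{C}$ and $\mathbf{v},\mathbf{v}'\in\mathbf{C}^\perp$. Their Euclidean inner product over the two halves is
\begin{eqnarray*}
\langle \mathbf{u},\mathbf{u}'\rangle + \langle \mathbf{u}+\mathbf{v},\mathbf{u}'+\mathbf{v}'\rangle
= 2\langle \mathbf{u},\mathbf{u}'\rangle + \langle \mathbf{u},\mathbf{v}'\rangle + \langle \mathbf{v},\mathbf{u}'\rangle + \langle \mathbf{v},\mathbf{v}'\rangle.
\end{eqnarray*}
Here is where the hypotheses enter. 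Because the characteristic is $2$ (the field is $\mathbf{F}_{2^s}$), the term $2\langle\mathbf{u},\mathbf{u}'\rangle$ vanishes. Because $\mathbf{C}$ is dual-containing, $\mathbf{C}^\perp\subseteq\mathbf{C}$, so $\mathbf{v},\mathbf{v}'\in\mathbf{C}$; hence $\langle\mathbf{u},\mathbf{v}'\rangle=0$ and $\langle\mathbf{v},\mathbf{u}'\rangle=0$ (an element of $\mathbf{C}^\perp$ is orthogonal to everything in $\mathbf{C}$), and likewise $\langle\mathbf{v},\mathbf{v}'\rangle=0$. Thus the whole sum is $0$ and $\mathbf{C}_1$ is self-orthogonal. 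For the dimension count, $\dim\mathbf{C}_1=\dim\mathbf{C}+\dim\mathbf{C}^\perp=k+(n-k)=n=\tfrac12(2n)$, so a self-orthogonal code of length $2n$ and dimension $n$ is forced to be self-dual. The characteristic-$2$ cancellation of the diagonal term is the one place the field assumption is essential, so I would flag it explicitly.

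\emph{Minimum distance.} This is the standard Plotkin-sum distance computation, and I expect it to be the main obstacle because it requires a genuine two-sided argument rather than a one-line inequality. The general Plotkin bound recalled in the paper gives $d(\mathbf{C}_1)\ge\min\{2d(\mathbf{C}),d(\mathbf{C}^\perp)\}$ once one observes that here the roles are $\mathbf{C}_1=\mathbf{C}$ and $\mathbf{C}_2=\mathbf{C}^\perp$, so I would first quote that to get the lower bound. For the matching upper bound I would exhibit codewords attaining each candidate value: taking $\mathbf{v}=\mathbf{0}$ and $\mathbf{u}$ a minimum-weight word of $\mathbf{C}$ gives $(\mathbf{u}|\mathbf{u})$ of weight $2\wt(\mathbf{u})=2d(\mathbf{C})$, while taking $\mathbf{u}=\mathbf{0}$ and $\mathbf{v}$ a minimum-weight word of $\mathbf{C}^\perp$ gives $(\mathbf{0}|\mathbf{v})$ of weight $d(\mathbf{C}^\perp)$. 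Hence $d(\mathbf{C}_1)\le\min\{2d(\mathbf{C}),d(\mathbf{C}^\perp)\}$, and combining the two bounds yields the claimed equality. The care needed is in the lower-bound direction: for a general codeword $(\mathbf{u}|\mathbf{u}+\mathbf{v})$ one splits into the cases $\mathbf{v}=\mathbf{0}$ (then the weight is $2\wt(\mathbf{u})\ge 2d(\mathbf{C})$) and $\mathbf{v}\neq\mathbf{0}$ (then $\wt(\mathbf{u})+\wt(\mathbf{u}+\mathbf{v})\ge\wt(\mathbf{v})\ge d(\mathbf{C}^\perp)$ by the triangle inequality for Hamming weight), so I would make that split explicit rather than leaning only on the cited bound.

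\emph{Hermitian case.} Finally I would observe that the entire argument transports verbatim with $\langle\cdot,\cdot\rangle$ replaced by $\langle\cdot,\cdot\rangle_H$ and $\mathbf{C}^\perp$ replaced by $\mathbf{C}^{\perp_H}$: the self-orthogonality computation again kills the diagonal term in characteristic $2$ and uses $\mathbf{C}^{\perp_H}\subseteq\mathbf{C}$, and the dimension count is unchanged since $\dim\mathbf{C}^{\perp_H}=n-k$. For the distance equality I would use the relation $\mathbf{C}^{\perp_H}=(\mathbf{C}^\perp)^{q_1}$ from \eqref{eqn-EHrelation}, which shows $\mathbf{C}^{\perp_H}$ and $\mathbf{C}^\perp$ have the same Hamming weight distribution and hence the same minimum distance $d(\mathbf{C}^{\perp_H})=d(\mathbf{C}^\perp)$; this is exactly what justifies the second equality stated in the theorem, so I would invoke it there and otherwise reuse the Plotkin-sum distance computation unchanged.
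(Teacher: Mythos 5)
Your proposal is correct and follows essentially the same route as the paper: the same self-orthogonality computation (with the characteristic-$2$ cancellation and the use of $\mathbf{C}^\perp\subseteq\mathbf{C}$), the same dimension count forcing self-duality, and the same appeal to the Plotkin-sum distance formula, which the paper simply cites as well known while you spell out the two-sided argument. The Hermitian case is likewise handled identically via the relation $d(\mathbf{C}^{\perp_H})=d(\mathbf{C}^\perp)$ coming from \eqref{eqn-EHrelation}.
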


\begin{proof}
For two codewords ${\bf c}_1=[{\bf u}_1|{\bf u}_1+{\bf v}_1]$ and ${\bf c}_2=[{\bf u}_2|{\bf u}_2+{\bf v}_2]$ in this code ${\bf C}_1$, their Euclidean inner product is $$<{\bf c}_1,{\bf c}_2>=(1+1)<{\bf u}_1, {\bf u}_2>+<{\bf u}_1, {\bf v}_2>+<{\bf u}_2,{\bf v}_1>+<{\bf v}_1, {\bf v}_2>.$$ Since ${\bf C}^{\perp}$ is self-orthogonal, this inner product is zero. It is clear that the dimension of ${\bf C}_1$ is $$\dim({\bf C}_1)=\dim({\bf C}^{\perp})+\dim({\bf C})=n.$$ It is well known
that
$$
d(\bC_1)=d(\Plotkin(\bC, \bC^\perp)) = \min\{ 2d(\bC), d(\bC^\perp) \}.
$$
Then the first conclusion follows immediately. The second conclusion can be similarly proved with the well known fact that
$d({\bf C}^{\perp_H})=d({\bf C}^{\perp})$ (it follows from \eqref{eqn-EHrelation}).
\end{proof}

Suppose that the length of the cyclic code ${\bf C}$ over ${\bf F}_{2^s}$ is odd and the generator polynomial of ${\bf C}$ is $g_1(x) \in {\bf F}_{2^s}[x]$. Since ${\bf C}^{\perp} \subseteq {\bf C}$,  the generator polynomial of ${\bf C}^{\perp}$ is $g_1(x)g_2(x) \in {\bf F}_{2^s}[x]$, where $g_2(x)$ is a divisor of $x^n+1$. By the generalized van Lint Theorem (see Theorem \ref{thm-evanlint}), the code $\bC_1$  is permutation-equivalent to the repeated-root cyclic code $\bC'$ over ${\bf F}_{2^s}$ with length $2n$ and generator polynomial $g_1^2(x) g_2(x)$. Therefore,
if the building block ${\bf C}$ in Theorem \ref{thm-3.1}  is a cyclic code with odd length $n$, then the code ${\bf C}_1$ is
permutation-equivalent to a repeated-root self-dual cyclic code of the length $2n$. In the subsequent sections,
we will construct specific families of dual-containing cyclic codes $\bC$ and then plug them into the construction
of Theorem \ref{thm-3.1} to obtain
families of self-dual cyclic codes with very good minimum distances.

\section{Euclidean and Hermitian dual-containing BCH codes}\label{sec-BCHcontain}

\subsection{A useful lemma}

The following lemma will be needed later.

\begin{lemma}\label{lem-4.1}
(see \cite{LDL}) Let $q>1$ be a positive integer and $a,b$ be two positive integers.  Then $\gcd(q^a-1, q^b-1)=q^{\gcd(a,b)}-1$, $\gcd(q^a+1, q^b-1)=1$ if $q$ is even and $\frac{b}{\gcd(a,b)}$ is odd,
$\gcd(q^a+1, q^b-1)=2$ if $q$ is odd and $\frac{b}{\gcd(a,b)}$ is odd, and $\gcd(q^a+1, q^b-1)=q^{\gcd(a,b)}+1$ if $\frac{b}{\gcd(a,b)}$ is even.
\end{lemma}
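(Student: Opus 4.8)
The plan is to treat the four claims in two stages: first establish the base identity $\gcd(q^a-1,q^b-1)=q^{\gcd(a,b)}-1$, and then reduce the three statements about $\gcd(q^a+1,q^b-1)$ to it. For the base identity I would run the subtractive Euclidean algorithm on the exponents. Assuming $a\ge b$, the division identity $q^a-1=q^{a-b}(q^b-1)+(q^{a-b}-1)$ shows that $\gcd(q^a-1,q^b-1)=\gcd(q^{a-b}-1,q^b-1)$, so the exponents evolve exactly as in the Euclidean algorithm computing $\gcd(a,b)$. Iterating until one exponent reaches $\gcd(a,b)$ yields $q^{\gcd(a,b)}-1$.

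For the remaining three cases I would write $d=\gcd(a,b)$, $a=da'$, $b=db'$ with $\gcd(a',b')=1$, and set $g=\gcd(q^a+1,q^b-1)$. Since $q^a+1$ divides $q^{2a}-1$, the number $g$ divides $\gcd(q^{2a}-1,q^b-1)$, which by the base identity equals $q^{\gcd(2a,b)}-1$. A short parity computation gives $\gcd(2a,b)=d\cdot\gcd(2a',b')$, and because $\gcd(a',b')=1$ one finds $\gcd(2a',b')=2$ when $b'$ is even (which forces $a'$ odd) and $\gcd(2a',b')=1$ when $b'$ is odd. Hence $g\mid q^d-1$ when $b/d$ is odd and $g\mid q^{2d}-1$ when $b/d$ is even; this dichotomy is precisely the trichotomy of the lemma.

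Suppose first that $b/d$ is odd, so $g\mid q^d-1$. As $d\mid a$ we have $q^d-1\mid q^a-1$, whence $g\mid\gcd(q^a+1,q^a-1)$, and since $(q^a+1)-(q^a-1)=2$ this forces $g\mid 2$. If $q$ is even then $q^a+1$ is odd, so $g=1$, proving the second claim; if $q$ is odd then $2$ divides both $q^a+1$ and $q^b-1$, so $g=2$, proving the third.

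The main work is the case $b/d$ even, where I claim $g=q^d+1$. Since $a'$ is odd, $q^d+1\mid q^a+1$; since $2d\mid b$, we have $q^{2d}-1\mid q^b-1$ and hence $q^d+1\mid q^b-1$; thus $q^d+1\mid g$. For the reverse divisibility I would write $g=(q^d+1)h$: from $g\mid q^{2d}-1=(q^d-1)(q^d+1)$ we get $h\mid q^d-1$, and from $g\mid q^a+1$ we get $h\mid (q^a+1)/(q^d+1)$. The key observation is that $(q^a+1)/(q^d+1)=\sum_{j=0}^{a'-1}(-1)^j q^{d(a'-1-j)}\equiv 1\pmod{q^d-1}$, because modulo $q^d-1$ each power $q^{dk}$ reduces to $1$ and the $a'$ (odd) alternating signs sum to $1$. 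Therefore $h$ divides both $q^d-1$ and a number coprime to $q^d-1$, forcing $h=1$ and $g=q^d+1$. I expect this last exact determination — ruling out the spurious factor coming from the $q^d-1$ part of $q^{2d}-1$ — to be the only delicate point, and the congruence above is what resolves it cleanly, with no separate $2$-adic bookkeeping needed.
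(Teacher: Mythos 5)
Your proof is correct and complete. Note that the paper itself offers no proof of this lemma --- it simply cites \cite{LDL} --- so there is no argument of the authors to compare against; your write-up is a valid self-contained derivation. The structure is sound: the exponent-level Euclidean algorithm for $\gcd(q^a-1,q^b-1)$, the reduction of $\gcd(q^a+1,q^b-1)$ to $q^{\gcd(2a,b)}-1$ via $q^a+1\mid q^{2a}-1$, and the final congruence $(q^a+1)/(q^d+1)\equiv 1\pmod{q^d-1}$ (which follows from the alternating-sum factorization with $a/d$ odd) all check out, and the last step correctly rules out any extra factor of $q^d-1$ without needing to treat the common factor $2$ of $q^d-1$ and $q^d+1$ separately.
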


\subsection{BCH codes}\label{sec-BCHcode}

In this subsection, we recall the basics of BCH codes.
Let $\gcd(q,n)=1.$ Define ${\bf Z}_n={\bf Z}/n{\bf Z}=\{0, 1, \ldots, n-1\}$. For each $i \in \bZ_n$,
the \emph{$q$-cyclotomic coset} $C_i$ containing $i$ is defined by
$$C_i=\{ iq^{i} \bmod{n} : 0 \leq i \leq l-1\},$$
where $l$ is the smallest positive integer such that $iq^l \equiv i \pmod{n}$ and
$a \bmod{n}$ denotes the unique $b \in \bZ_n$ such that $a \equiv b \pmod{n}$. The smallest integer
in $C_i$ is called the \emph{coset leader} of $C_i$. Let $\Gamma(q,n)$ denote the set of all coset leaders. Then
$\{C_i: i \in \Gamma(q,n)\}$ is a partition of $\bZ_n$. \\

 Let $m=\ord_n(q)$, which is the smallest positive integer $\ell$ such that $q^\ell \equiv 1 \pmod{n}$.
 Let $\alpha$ be a primitive element of $\bF_{q^m}$. Define $\beta=\alpha^{(q^m-1)/n}$.
 Then $\beta$ is an $n$-th primitive root of unity. Define
 $$
 \m_{\beta^i}(x)=\prod_{j \in C_i} (x-\beta^j).
 $$
 It is easily seen that $ \m_{\beta^i}(x)$ is an irreducible polynomial in $\bF_q[x]$ and the canonical factorization
 of $x^n-1$ over $\bF_q$ is
 $$
 x^n-1=\prod_{j \in \Gamma(q,n)}  \m_{\beta^i}(x).
 $$
 By definition, the generator polynomial $g(x)$ of a cyclic code $\bC$ of length $n$ over $\bF_q$ is a divisor
 of $x^n-1$. The \emph{defining set} of the cyclic code $\bC$ with respect to $\beta$ is defined by
\begin{eqnarray}\label{eqn-definingset}
{\bf T}_{{ g}}=\{i \in \bZ_n: {g}(\beta^i)=0\}.
\end{eqnarray}
Then the defining set of a cyclic code is the disjoint union of some $q$-cyclotomic cosets. \\

Let $\delta \geq 2$ be an integer and $b$ be an integer. Define
\begin{eqnarray}\label{eqn-gpBCHcode}
g_{(q,n,\delta,b, \beta)}(x)=\lcm\{ \m_{\beta^b}(x), \ldots, \m_{\beta^{b+\delta-2}}(x)\},
\end{eqnarray}
where $\lcm$ denotes the least common multiple of the set of polynomials over $\bF_q$.
Let $\bC_{(q,n,\delta,b, \beta)}$ denote the cyclic code over $\bF_q$ with length $n$ and
generator polynomial $g_{(q,n,\delta,b, \beta)}(x)$. This code  $\bC_{(q,n,\delta,b, \beta)}$
is called a \emph{BCH code} with designed distance $\delta$. It is well known that the
minimum distance of $\bC_{(q,n,\delta,b, \beta)}$ is lower bounded by $\delta$.
This follows from the famous BCH bound for cyclic codes \cite{HP,MScode}.

\subsection{Some Euclidean and Hermitian dual-containing BCH codes}

The purpose of this section is to introduce a family of known Euclidean dual-containing BCH codes and a family
of known Hermitian dual-containing BCH codes. These known codes will be used as basic building blocks in the constructions of self-dual codes in later sections.

Throughout this section, let $q$ be a prime power and $n$ be a positive integer with $\gcd(q,n)=1$.
Define ${\bf Z}_n={\bf Z}/n{\bf Z}=\{0, 1, \ldots, n-1\}$. For any subset $\bT \subseteq \bZ_n$, define
$$
\bT^{-1}=\{ n-i: i \in \bT\}
$$
and
$$
\bT^c=\bZ_n \setminus \bT,
$$
which is the complement set of $\bT$. In the case that $q=q_1^2$, define
$$
\bT^{-q_1}=\{(n-q_1i) \bmod{n}: i \in \bT\}.
$$


We follow the notation of Section \ref{sec-BCHcode}. Let $\bC$ be a cyclic code over $\bF_q$ with length $n$
and generator polynomial $g(x)$, and $\beta$ be an $n$-th primitive root of unity in $\bF_{q^m}$. Let
$\bT_g$ be the defining set of $\bC$ with respect to $\beta$ defined in \eqref{eqn-definingset}.
Then the defining set of the Euclidean dual code $\bC^\perp$ is $({\bf T}_g^c)^{-1}$,
where ${\bf T}_g^c={\bf Z}_n \setminus {\bf T}_g$ is the complementary set of $\bT_g$.
The following result is well known  \cite{LDL}.

\begin{lemma}\label{lem-oct11Eucl}
The following hold:
\begin{itemize}
\item The code $\bC$ is Euclidean dual-containing if and only if ${\bf T}_g \cap {\bf T}_g^{-1}=\emptyset$.
\item The code $\bC$ is Hermitian dual-containing if and only if ${\bf T}_g \cap {\bf T}_g^{-q_1}=\emptyset$,
\end{itemize}
where $q=q_1^2$ and
$$
\bT_g^{-q_1}=\{(n-q_1i) \bmod{n}: i \in \bT_g\}.
$$
\end{lemma}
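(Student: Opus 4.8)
The plan is to characterize the two dual-containing conditions via defining sets, using the
description of the dual code's defining set recalled in the text. Recall that a cyclic
code $\bC$ with defining set $\bT_g$ contains its Euclidean dual $\bC^\perp$ precisely
when the defining set of $\bC$ is \emph{contained in} the defining set of $\bC^\perp$;
this is because dual-containment $\bC^\perp \subseteq \bC$ is equivalent to the reverse
inclusion of defining sets (a cyclic subcode corresponds to a larger defining set).
Since the excerpt states that the defining set of $\bC^\perp$ is $(\bT_g^c)^{-1}$, the
condition $\bC^\perp \subseteq \bC$ becomes $\bT_g \subseteq (\bT_g^c)^{-1}$.

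The first key step is therefore to translate $\bT_g \subseteq (\bT_g^c)^{-1}$ into the
stated condition $\bT_g \cap \bT_g^{-1} = \emptyset$. I would do this by unwinding the
set operations: taking inverses is an involution on $\bZ_n$, so $(\bT_g^c)^{-1} =
(\bT_g^{-1})^c$, and hence $\bT_g \subseteq (\bT_g^{-1})^c$ is equivalent to saying
$\bT_g$ is disjoint from $\bT_g^{-1}$, i.e. $\bT_g \cap \bT_g^{-1} = \emptyset$. The
only facts needed here are that $\bullet^{-1}$ commutes with complementation (both being
bijections on $\bZ_n$) and the elementary equivalence $A \subseteq B^c \iff A \cap B =
\emptyset$. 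This handles the Euclidean case completely.

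For the Hermitian case I would run the parallel argument, using the relation
$\bC^{\perp_H} = (\bC^\perp)^{q_1}$ from \eqref{eqn-EHrelation}. Raising codewords to the
$q_1$-th power corresponds, at the level of defining sets, to multiplying indices by
$q_1$ (equivalently, applying the map $i \mapsto q_1 i \bmod n$ to the defining set), so
the defining set of $\bC^{\perp_H}$ works out to $(\bT_g^c)^{-q_1}$. The containment
$\bC^{\perp_H} \subseteq \bC$ then reads $\bT_g \subseteq (\bT_g^c)^{-q_1}$, and the same
complementation identity $(\bT_g^c)^{-q_1} = (\bT_g^{-q_1})^c$ converts this into $\bT_g
\cap \bT_g^{-q_1} = \emptyset$, which is the claimed criterion.

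The step I expect to require the most care is verifying that the map $i \mapsto q_1 i$ on
index sets is a genuine bijection of $\bZ_n$ and that it interacts correctly with
complementation, so that $(\bT_g^c)^{-q_1} = (\bT_g^{-q_1})^c$ holds. This rests on
$\gcd(q_1, n) = 1$ (which follows from $\gcd(q,n) = 1$ with $q = q_1^2$), guaranteeing
that multiplication by $q_1$ permutes $\bZ_n$ and hence commutes with taking complements.
Once that bijectivity is in hand, both equivalences are purely formal set manipulations,
and since the underlying fact is standard and attributed to \cite{LDL}, I would keep the
argument brief rather than reprove the description of the dual defining sets from scratch.
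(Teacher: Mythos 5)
Your proof is correct. The paper itself offers no argument for this lemma---it is stated as ``well known'' with a citation to \cite{LDL}---so there is nothing to compare against; your self-contained derivation is the standard one and every step checks out. Specifically: the containment-reversal for cyclic codes (one cyclic code is contained in another if and only if its defining set contains the other's) correctly converts $\bC^\perp \subseteq \bC$ into $\bT_g \subseteq (\bT_g^c)^{-1}$; the map $i \mapsto n-i$ is an involution of $\bZ_n$ and the map $i \mapsto (n-q_1 i) \bmod n$ is a bijection of $\bZ_n$ since $\gcd(q_1,n)=1$ follows from $\gcd(q,n)=1$ with $q=q_1^2$, so both commute with complementation and the elementary equivalence $A \subseteq B^c \iff A\cap B=\emptyset$ finishes each case; and the identification of the defining set of $\bC^{\perp_H}=(\bC^\perp)^{q_1}$ as $(\bT_g^c)^{-q_1}$ is justified by the Frobenius computation $\bigl(\bc(\beta^{i})\bigr)^{q_1}=\bc^{q_1}(\beta^{q_1 i})$, which shows that raising codewords componentwise to the $q_1$-th power multiplies the defining set by $q_1$. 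If you wanted to make the write-up fully airtight you could display that one-line Frobenius identity explicitly, but as it stands nothing essential is missing.
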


The BCH lower bound on the minimum distances of cyclic codes can be used to construct Euclidean and Hermitian dual-containing codes with a square-root-like lower bound on their minimum distances. Then such codes will be
used to construct self-dual cyclic codes with a square-root-like lower bound on their minimum distances later. \\

The following result is a special case of Theorem 3 in \cite{Aly} and will be needed later.

\begin{theorem}\label{thm-4.2}
Let $m$ be odd.
Let $\mu $ be a proper divisor of $q^m-1$ and $n=\frac{q^m-1}{\mu} >4$.
Put
$$
\delta =\left\lfloor \frac{q^{\frac{m+1}{2}}-q+1}{\mu} \right\rfloor.
$$
Then $\bC_{(q,n,\delta,1, \beta)}$ is a Euclidean dual-containing BCH code with minimum distance at least $\delta$.
\end{theorem}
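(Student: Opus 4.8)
The plan is to prove this by exhibiting the defining set of the BCH code $\bC_{(q,n,\delta,1,\beta)}$ and verifying the Euclidean dual-containing condition $\bT_g \cap \bT_g^{-1} = \emptyset$ from Lemma~\ref{lem-oct11Eucl}. The generator polynomial has defining set $\bT_g = \bigcup_{i=1}^{\delta-1} C_i$, the union of the $q$-cyclotomic cosets of $1, 2, \ldots, \delta-1$ modulo $n$. Since the BCH bound already guarantees minimum distance at least $\delta$, the only thing to establish is the dual-containing property; the minimum-distance claim is then immediate.

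\medskip

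\textbf{Key steps.} First I would recall that $\bT_g^{-1} = \{ n - i \bmod n : i \in \bT_g\}$, and that $\bT_g$ consists of all integers congruent mod $n$ to some $j q^t$ with $1 \le j \le \delta - 1$ and $t \ge 0$. The intersection condition $\bT_g \cap \bT_g^{-1} = \emptyset$ is equivalent to showing that for no pair of integers $a, b$ in the range $[1, \delta-1]$ and no exponents $s, t \ge 0$ do we have $a q^s \equiv -b q^t \pmod{n}$. After cancelling the common power of $q$ (which is a unit mod $n$, since $\gcd(q,n)=1$), this reduces to ruling out $a q^\ell \equiv -b \pmod{n}$ for $1 \le a, b \le \delta - 1$ and $0 \le \ell \le m-1$. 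Second, I would use $n = (q^m - 1)/\mu$, so that the congruence becomes $\mu(a q^\ell + b) \equiv 0 \pmod{q^m - 1}$, i.e. $q^m - 1 \mid \mu(a q^\ell + b)$. The strategy is to bound $\mu(a q^\ell + b)$ and show it is too small — but nonzero and positive — to be a multiple of $q^m - 1$ except in a range excluded by the choice of $\delta$. Using $a, b \le \delta - 1 \le (q^{(m+1)/2} - q + 1)/\mu$, one bounds $\mu a$ and $\mu b$ by roughly $q^{(m+1)/2}$, so that $\mu(a q^\ell + b) \le q^{(m+1)/2} q^\ell + q^{(m+1)/2}$. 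Third, I would carry out a careful case analysis on the exponent $\ell$: for small $\ell$ the quantity is strictly less than $q^m - 1$ (forcing the multiple to be zero, impossible since everything is positive), and for $\ell$ near $m-1$ one shifts by multiplying through by an appropriate power of $q$ and reduces modulo $q^m - 1$ to land back in a controllable range, exploiting the odd parity of $m$ through the exponent $(m+1)/2$.

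\medskip

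\textbf{The main obstacle.} The hard part will be the boundary cases of the exponent $\ell$, particularly $\ell$ close to $m-1$, where the product $a q^\ell$ can exceed $q^m - 1$ and one must reduce it modulo $q^m-1$ before comparing sizes. The arithmetic here is delicate precisely because the threshold $\delta = \lfloor (q^{(m+1)/2} - q + 1)/\mu \rfloor$ is chosen so that the two halves of the exponent range just barely avoid creating a collision, and the oddness of $m$ is what makes $(m+1)/2$ an integer giving a clean split. I expect that after multiplying the congruence by a suitable power of $q$ and reducing, the worst case collapses to comparing two quantities each bounded by about $q^{(m+1)/2}$ against $q^m$, where the strict inequality $2 q^{(m+1)/2} < q^m$ (valid for $m \ge 3$, which is forced by $m$ odd and $n > 4$) closes the argument. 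The role of the floor function and the explicit $-q+1$ correction term in $\delta$ will need to be tracked precisely to ensure the bounds are strict rather than merely non-strict; this is where most of the genuine work lies, and it is essentially the content borrowed from Theorem~3 of \cite{Aly}.
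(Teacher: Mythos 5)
Your proposal is correct in its overall strategy, but it takes a genuinely different route from the paper: the paper's entire proof is two sentences, namely (i) use Lemma~\ref{lem-4.1} to check that $\ord_n(q)=m$ and (ii) quote Theorem~3 of \cite{Aly}, which is exactly the statement being proved. What you have written is, in effect, a reconstruction of the argument \emph{inside} that cited theorem rather than an appeal to it, and you correctly identify the mechanism: $\bT_g=\bigcup_{i=1}^{\delta-1}C_i$, the dual-containing condition $\bT_g\cap\bT_g^{-1}=\emptyset$ from Lemma~\ref{lem-oct11Eucl}, reduction to ruling out $aq^{\ell}\equiv -b\pmod{n}$, and the split of the exponent range at $(m-1)/2$ with the large-$\ell$ case handled by multiplying through by $q^{m-\ell}$ and using $q^{m}\equiv 1\pmod{n}$ (which holds since $n\mid q^{m}-1$, so you do not even need the full strength of $\ord_n(q)=m$ here). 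Your approach buys a self-contained proof that makes visible why the $-q+1$ correction in $\delta$ and the oddness of $m$ matter; the paper's buys brevity at the cost of opacity. One quantitative remark: the clean way to close your "small $\ell$" case is not the inequality $2q^{(m+1)/2}<q^{m}$ you anticipate, but rather the observation that $\mu(\delta-1)\le q^{(m+1)/2}-q$ (from the floor), whence for $\ell\le(m-1)/2$ one gets $\mu(aq^{\ell}+b)\le\bigl(q^{(m+1)/2}-q\bigr)\bigl(q^{(m-1)/2}+1\bigr)=q^{m}-q<q^{m}-1$, so the positive quantity $\mu(aq^{\ell}+b)$ cannot be a multiple of $q^{m}-1$; the large-$\ell$ case then reduces to this one by the symmetry $a\leftrightarrow b$. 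You should also note explicitly that the statement is only meaningful when $\delta\ge 2$ (the paper imposes this later, in Theorem~\ref{thm-5.1}); for very large $\mu$ the floor can be $0$ or $1$ and the BCH code is not defined.
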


\begin{proof}
It follows from Lemma \ref{lem-4.1} that $\ord_n(q)=m$. The desired conclusion then follows from
\cite[Theorem 3]{Aly}.
\end{proof}


For Hermitian dual-containing BCH codes, we have the following result. The proof is similar to the proof of
Theorem \ref{thm-4.2} with the help of Lemma \ref{lem-oct11Eucl} (see \cite{Aly}).

\begin{theorem}\label{thm-4.3}
Suppose that $q=q_1^2$.
Let $\mu $ be a proper divisor of $q_1^{m}-1$, $m$ be odd, and $n=\frac{q^m-1}{\mu}$. Put
$\delta=  (q_1^m-1)/\mu $.
Then $\bC_{(q,n,\delta,1, \beta)}$ is a Hermitian dual-containing BCH code with minimum distance at least $\delta$.
\end{theorem}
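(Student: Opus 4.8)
The plan is to prove Theorem \ref{thm-4.3} by mirroring the structure of the proof of Theorem \ref{thm-4.2}, replacing the Euclidean dual-containing criterion with the Hermitian one from Lemma \ref{lem-oct11Eucl}. First I would establish that $\ord_n(q) = m$. Since $q = q_1^2$ and $n = \frac{q^m-1}{\mu} = \frac{q_1^{2m}-1}{\mu}$, I need to verify that $m$ is indeed the multiplicative order of $q$ modulo $n$; this should follow from Lemma \ref{lem-4.1} applied to the even base $q$ together with the fact that $\mu$ is a proper divisor, exactly as in the Euclidean case. This guarantees that the BCH code $\bC_{(q,n,\delta,1,\beta)}$ has a well-defined defining set built from consecutive $q$-cyclotomic cosets $C_1, \ldots, C_{\delta-1}$, and the BCH bound immediately yields minimum distance at least $\delta$.

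The core of the argument is to show the Hermitian dual-containing property, which by Lemma \ref{lem-oct11Eucl} is equivalent to $\bT_g \cap \bT_g^{-q_1} = \emptyset$, where $\bT_g = \bigcup_{i=1}^{\delta-1} C_i$ is the defining set and $\bT_g^{-q_1} = \{(n - q_1 i) \bmod n : i \in \bT_g\}$. The plan is to suppose for contradiction that the intersection is nonempty, so that there exist $a, b \in \{1, \ldots, \delta-1\}$ (up to multiplication by powers of $q$, since the sets are unions of $q$-cyclotomic cosets) with
\begin{eqnarray*}
a \equiv -q_1 b q^{t} \pmod{n}
\end{eqnarray*}
for some integer $t \geq 0$. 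Because $q = q_1^2$, every power $q_1 q^t = q_1^{2t+1}$ is an odd power of $q_1$, so the congruence takes the shape $a + q_1^{2t+1} b \equiv 0 \pmod n$ for some exponent. I would then bound the left-hand side: with $1 \leq a, b \leq \delta - 1 = (q_1^m - 1)/\mu - 1$, the quantity $a + q_1^j b$ (for the relevant range of odd $j < 2m$) must be squeezed strictly between consecutive multiples of $n$, forcing a contradiction. This is a standard ``no solution to a small linear congruence'' estimate, and the choice $\delta = (q_1^m-1)/\mu$ is precisely calibrated so that $\delta - 1$ is small enough for the squeeze to work.

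The main obstacle, and the step I would spend the most care on, is the case analysis on the exponent $t$ (equivalently $j = 2t+1$) in the congruence $a + q_1^j b \equiv 0 \pmod n$. One must check \emph{every} odd $j$ in the range $0 < j < 2m$, not just the smallest, because elements of $\bT_g^{-q_1}$ are closed under the Frobenius action $x \mapsto qx$. The delicate subcase is typically $j$ near $m$, where $q_1^j b$ has magnitude comparable to $n$ itself; here the proper-divisor hypothesis on $\mu$ and the reduction modulo $n$ must be handled explicitly, using $q_1^{2m} \equiv 1 \pmod{n \mu}$-type identities together with Lemma \ref{lem-4.1} to control $\gcd(q_1^j \pm 1, n)$. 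Once all odd exponents are ruled out, the intersection is empty, the Hermitian dual-containing property holds, and the theorem follows. Since the paper states the proof is ``similar to the proof of Theorem \ref{thm-4.2},'' I would ultimately cite \cite{Aly} for the detailed congruence bookkeeping and present only the adaptation of the order computation and the Hermitian criterion in full.
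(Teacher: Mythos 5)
Your proposal matches the paper's approach: the paper's own proof is a one-line deferral stating that the argument is ``similar to the proof of Theorem \ref{thm-4.2} with the help of Lemma \ref{lem-oct11Eucl} (see \cite{Aly})'', i.e., compute $\ord_n(q)=m$ via Lemma \ref{lem-4.1}, apply the Hermitian cyclotomic-coset criterion, and cite \cite{Aly} for the coset-intersection analysis. Your sketch of the congruence squeeze $a+q_1^{2t+1}b\equiv 0\pmod{n}$ is exactly the content of the cited result, so you are simply supplying more detail than the paper itself does.
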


\section{Self-dual cyclic codes over ${\bf F}_{2^s}$ with a square-root-like lower bound}\label{sec-code1}

Combining Theorems \ref{thm-evanlint}, \ref{thm-3.1}, \ref{thm-4.2} and \ref{thm-4.3},
we will construct  Euclidean and Hermitian self-dual cyclic codes over ${\bf F}_q$ for even $q$  with a
square-root-like lower bound on their minimum distances.

\begin{theorem}\label{thm-5.1}
Let $q=2^s$ with $s \geq 1$ and $m$ be odd.
Let $\mu $ be a proper divisor of $q^m-1$ and $n=\frac{q^m-1}{\mu} >4$.
Put
$$
\delta =\left\lfloor \frac{q^{\frac{m+1}{2}}-q+1}{\mu} \right\rfloor.
$$
Let $h_{(q,n,\delta,1, \beta)}^\perp(x)$ denote the reciprocal polynomial of $(x^n+1)/g_{(q,n,\delta,1, \beta)}(x)$,
where $g_{(q,n,\delta,1, \beta)}(x)$ is the generator polynomial of $\bC_{(q,n,\delta,1, \beta)}$.
Let $\bC'$ denote the cyclic code of length $2n$ over $\bF_q$ with generator polynomial
$g_{(q,n,\delta,1, \beta)}(x) h_{(q,n,\delta,1, \beta)}^\perp(x)$. If $\delta \geq 2$,
then
$\bC'$ is a Euclidean self-dual cyclic code with parameters
$
[2n, n, d],
$
where
$$
d=\min\{2d(\bC_{(q,n,\delta,1, \beta)}), \, d(\bC_{(q,n,\delta,1, \beta)}^\perp)\} \geq \delta.
$$
\end{theorem}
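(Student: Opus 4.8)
The plan is to assemble Theorem \ref{thm-5.1} directly from the three ingredients already established: the Euclidean dual-containment of the BCH code supplied by Theorem \ref{thm-4.2}, the self-dual Plotkin construction of Theorem \ref{thm-3.1}, and the permutation-equivalence with a repeated-root cyclic code furnished by the generalized van Lint Theorem \ref{thm-evanlint}. Write $\bC := \bC_{(q,n,\delta,1,\beta)}$ for brevity, with generator polynomial $g_1(x) := g_{(q,n,\delta,1,\beta)}(x)$. First I would invoke Theorem \ref{thm-4.2}: since $q=2^s$, $m$ is odd, $\mu$ is a proper divisor of $q^m-1$, and $n=(q^m-1)/\mu>4$, the code $\bC$ is a Euclidean dual-containing BCH code with $d(\bC)\geq\delta$. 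Dual-containment means $\bC^\perp\subseteq\bC$, so the generator polynomial of $\bC^\perp$ is a multiple of $g_1(x)$; write it as $g_1(x)g_2(x)$ where $g_2(x)$ divides $x^n+1$. Note that the generator polynomial of $\bC^\perp$ is precisely the reciprocal $h_{(q,n,\delta,1,\beta)}^\perp(x)$ of the check polynomial $(x^n+1)/g_1(x)$ of $\bC$, which is how the theorem names the second factor.

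Second, I would apply Theorem \ref{thm-3.1} with building block $\bC$. Because $\bC$ is Euclidean dual-containing over $\bF_{2^s}$, the Plotkin sum $\Plotkin(\bC,\bC^\perp)=\{\bu\,|\,\bu+\bv:\bu\in\bC,\bv\in\bC^\perp\}$ is a Euclidean self-dual code of length $2n$ and dimension $n$, with minimum distance exactly
$$
\min\{2d(\bC),\,d(\bC^\perp)\}.
$$
Third, I would identify this Plotkin sum with the stated cyclic code $\bC'$. Since $n$ is odd (as $q=2^s$ and $\gcd(q,n)=1$ force $n$ odd), Theorem \ref{thm-evanlint} applies with the factorization above: $\bC_1=\bC$ has generator $g_1(x)$ and $\bC_2=\bC^\perp$ has generator $g_1(x)g_2(x)$ with $g_2(x)\mid x^n+1$. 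The theorem then gives that $\Plotkin(\bC,\bC^\perp)$ is permutation-equivalent to the repeated-root cyclic code of length $2n$ generated by $g_1(x)^2 g_2(x)$. Observing that $g_1(x)^2 g_2(x)=g_1(x)\cdot\bigl(g_1(x)g_2(x)\bigr)=g_1(x)\,h_{(q,n,\delta,1,\beta)}^\perp(x)$ up to the factor $f(x)=\gcd(g_1,g_2)$ absorbed in the generator as $g_1(x)^2g_2(x)/f(x)$, this is exactly the generator polynomial defining $\bC'$. Since permutation-equivalent codes share dimension, minimum distance, and (by the third remark after Theorem \ref{thm-evanlint}) the self-dual property, $\bC'$ inherits parameters $[2n,n,d]$ with $d=\min\{2d(\bC),d(\bC^\perp)\}$ and is Euclidean self-dual.

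Finally, for the lower bound I would argue $d\geq\delta$ by bounding both terms. The BCH bound gives $d(\bC)\geq\delta$, hence $2d(\bC)\geq 2\delta\geq\delta$. For the dual term, note $d(\bC^\perp)$ is the minimum distance of a self-orthogonal code; one shows $d(\bC^\perp)\geq\delta$ because $\bC^\perp\subseteq\bC$ forces every nonzero codeword of $\bC^\perp$ to be a nonzero codeword of $\bC$, whence $d(\bC^\perp)\geq d(\bC)\geq\delta$. Therefore $d=\min\{2d(\bC),d(\bC^\perp)\}\geq\delta$, and the hypothesis $\delta\geq 2$ guarantees the code is nondegenerate. \textbf{The main obstacle} I anticipate is the bookkeeping in the third step: carefully matching the generator polynomial $g_1(x)^2g_2(x)/f(x)$ produced by Theorem \ref{thm-evanlint} against the product $g_1(x)\,h_{(q,n,\delta,1,\beta)}^\perp(x)$ named in the statement, and confirming that the definition of $\bC^\perp$ via the reciprocal of the check polynomial is consistent with the factorization $g_1(x)g_2(x)$ with $g_2\mid x^n+1$; everything else follows by citing the earlier results.
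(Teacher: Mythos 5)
Your proposal is correct and follows essentially the same route as the paper: invoke Theorem \ref{thm-4.2} for dual-containment and the BCH bound, Theorem \ref{thm-3.1} for self-duality and the minimum distance of the Plotkin sum, and Theorem \ref{thm-evanlint} to identify that sum with the cyclic code $\bC'$. Your added details — the observation that $n$ is odd because $q$ is even, the matching of $g_1(x)h^\perp_{(q,n,\delta,1,\beta)}(x)$ with $g_1(x)^2g_2(x)$ up to the $\gcd$ factor, and the explicit step $d(\bC^\perp)\geq d(\bC)\geq\delta$ from $\bC^\perp\subseteq\bC$ — are all sound and simply flesh out what the paper leaves implicit.
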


\begin{proof}
Suppose that $\delta \geq 2$.
By definition, $h_{(q,n,\delta,1, \beta)}^\perp(x)$ is the generator polynomial of $\bC_{(q,n,\delta,1, \beta)}^\perp$.
Let
$$\bC=\Plotkin( \bC_{(q,n,\delta,1, \beta)}, \bC_{(q,n,\delta,1, \beta)}^\perp).
$$
By Theorem \ref{thm-evanlint}, $\bC$ is permutation-equivalent to $\bC'$.
The desired conclusions then follow from Theorem \ref{thm-3.1} and Theorem \ref{thm-4.2}.
\end{proof}

We have the following remarks regarding the cyclic code $\bC'$ in Theorem \ref{thm-5.1}.
\begin{itemize}
\item The lower bound $d \geq \delta$ on the minimum distance of the cyclic code $\bC'$ in Theorem \ref{thm-5.1}
is a square-root-like lower bound.
\item When $\mu=1$, the cyclic code $\bC'$ in Theorem \ref{thm-5.1} has parameters
$$
[2(q^m-1), q^m-1, d \geq q^{(m+1)/2}-q+1].
$$
In this case, the lower bound $d \geq q^{(m+1)/2}-q+1$ is better than the square-root bound for $q \geq 4$
\item
When $\mu=1$ and $q=2$, the cyclic code $\bC'$ in Theorem \ref{thm-5.1} has minimum distance
$d \geq 2^{(m+1)/2}-1$. Since $\bC'$ is a self-dual binary cyclic code, the Hamming weight of every
codeword of $\bC'$ must be even. It then follows from $d \geq 2^{(m+1)/2}-1$ that
\begin{eqnarray}\label{eqn-breakthrough}
d(\bC') \geq 2^{(m+1)/2} =\left \lceil \sqrt{2(2^m-1)} \right\rceil.
\end{eqnarray}
Hence, in this case  $\bC'$ is a self-dual binary cyclic code with
parameters
$$
[2(2^m-1), 2^m-1, d \geq 2^{(m+1)/2}].
$$
In this case, the lower bound $d \geq 2^{(m+1)/2}$ is exactly the square-root bound.
\end{itemize}

\begin{example}
Let $(q, \mu)=(2,1)$.
\begin{itemize}
\item When $m=3$, the cyclic code $\bC'$ in Theorem \ref{thm-5.1}  has parameters $[14,7,4]_2$.
This code has the best parameters known according to \cite{Gaborit2,Gaborit}.
\item When $m=5$, the cyclic code $\bC'$ in Theorem \ref{thm-5.1}  has parameters $[62,31,8]_2$.
\end{itemize}
\end{example}

\begin{example}
Let $(q, m, \mu)=(4,3, 3)$. Then  the cyclic code $\bC'$ in Theorem \ref{thm-5.1}  has parameters $[42, 21 , 8 ]_4$.
\end{example}

For Hermitian self-dual cyclic codes, we have the following result.

\begin{theorem}\label{thm-5.2}
Let $q_1=2^{s}$ with $s \geq 1$, $q=q_1^2$ and $m$ be odd.
Let $\mu $ be a proper divisor of $q_1^m-1$ and $n=\frac{q^m-1}{\mu} >4$.
Put
$$
\delta = \frac{q_1^{m}-1}{\mu}.
$$
Let $h_{(q,n,\delta,1, \beta)}^{\perp_H}(x)$ denote the generator polynomial of the Hermitian dual
code  $\bC_{(q,n,\delta,1, \beta)}^{\perp_H}$.
Let $\bC'$ denote the cyclic code of length $2n$ over $\bF_q$ with generator polynomial
$g_{(q,n,\delta,1, \beta)}(x) h_{(q,n,\delta,1, \beta)}^{\perp_H}(x)$,
where $g_{(q,n,\delta,1, \beta)}(x)$ is the generator polynomial of $\bC_{(q,n,\delta,1, \beta)}$.
Then
$\bC'$ is a Hermitian self-dual cyclic code with parameters
$
[2n, n, d],
$
where
$$
d=\min\{2d(\bC_{(q,n,\delta,1, \beta)}), \, d(\bC_{(q,n,\delta,1, \beta)}^\perp)\} \geq \delta.
$$

\end{theorem}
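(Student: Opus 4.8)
The plan is to mirror the proof of Theorem \ref{thm-5.1} step for step, replacing Euclidean duality by Hermitian duality and invoking Theorem \ref{thm-4.3} in place of Theorem \ref{thm-4.2}. First I would set $\bC = \bC_{(q,n,\delta,1, \beta)}$ and record, via Theorem \ref{thm-4.3}, that $\bC$ is a Hermitian dual-containing BCH code with $d(\bC) \geq \delta$; in particular $\bC^{\perp_H} \subseteq \bC$. I would also note that since $q = q_1^2 = 2^{2s}$ is even and $\mu \mid q_1^m - 1 \mid q^m - 1$, the integer $n = (q^m-1)/\mu$ is odd, which is precisely the hypothesis under which the generalized van Lint theorem applies.

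Next I would untangle the generator polynomials. By definition $h^{\perp_H}(x) := h_{(q,n,\delta,1, \beta)}^{\perp_H}(x)$ is the generator polynomial of $\bC^{\perp_H}$. Because $\bC^{\perp_H} \subseteq \bC$, the generator polynomial $g_1(x) := g_{(q,n,\delta,1, \beta)}(x)$ of $\bC$ divides $h^{\perp_H}(x)$; writing $h^{\perp_H}(x) = g_1(x) g_2(x)$, the cofactor $g_2(x)$ divides $(x^n+1)/g_1(x)$ and hence divides $x^n+1$. Thus the pair $(\bC, \bC^{\perp_H})$ is exactly of the shape $(\bC_1, \bC_2)$ required by Theorem \ref{thm-evanlint}, with $\bC_2$ generated by $g_1(x)g_2(x)$ and $g_2 \mid x^n+1$.

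I would then assemble the conclusion. Applying Theorem \ref{thm-evanlint} to $\Plotkin(\bC, \bC^{\perp_H})$ shows it is permutation-equivalent to the length-$2n$ repeated-root cyclic code generated by $g_1(x)^2 g_2(x) = g_1(x) h^{\perp_H}(x)$, which is exactly $\bC'$. The Hermitian half of Theorem \ref{thm-3.1} then gives that $\Plotkin(\bC, \bC^{\perp_H})$ is Hermitian self-dual of dimension $n$ with minimum distance $\min\{d(\bC^{\perp_H}), 2d(\bC)\} = \min\{d(\bC^\perp), 2d(\bC)\}$. Since coordinate permutations preserve the Hermitian inner product, both Hermitian self-duality and the minimum distance transfer to $\bC'$ (as recorded in the remark after Theorem \ref{thm-evanlint}), yielding the parameters $[2n, n, d]$.

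For the lower bound, I would observe that $\bC^{\perp_H} \subseteq \bC$ forces $d(\bC^{\perp_H}) \geq d(\bC) \geq \delta$, so that $d = \min\{2d(\bC), d(\bC^\perp)\} \geq \min\{2\delta, \delta\} = \delta$. The only point genuinely needing care is the generator-polynomial bookkeeping of the second paragraph together with the oddness of $n$, since this is what licenses the use of the generalized van Lint theorem in the Hermitian setting; the verification that permutation-equivalence respects Hermitian (and not merely Euclidean) self-duality is the one place where the Hermitian case must be argued separately, but it is immediate because coordinate permutations act as isometries for the Hermitian inner product. Everything else is a direct transcription of the Euclidean argument in Theorem \ref{thm-5.1}.
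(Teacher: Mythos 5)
Your proposal is correct and follows essentially the same route as the paper's own proof: form $\Plotkin(\bC_{(q,n,\delta,1,\beta)}, \bC_{(q,n,\delta,1,\beta)}^{\perp_H})$, identify it with $\bC'$ via Theorem \ref{thm-evanlint}, and conclude by Theorems \ref{thm-3.1} and \ref{thm-4.3} together with $d(\bC^{\perp_H})=d(\bC^\perp)$. The paper's proof is much terser; your added bookkeeping (oddness of $n$, the factorization $h^{\perp_H}=g_1g_2$ with $g_2\mid x^n+1$, and the chain $d(\bC^{\perp_H})\geq d(\bC)\geq\delta$) correctly fills in the details the paper leaves implicit.
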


\begin{proof}
Notice that
$$
d(\bC_{(q,n,\delta,1, \beta)}^\perp)=d(\bC_{(q,n,\delta,1, \beta)}^{\perp_H}).
$$
By definition, $\delta \geq 2$.
Let
$$\bC=\Plotkin( \bC_{(q,n,\delta,1, \beta)}, \bC_{(q,n,\delta,1, \beta)}^{\perp_H}).
$$
By Theorem \ref{thm-evanlint}, $\bC$ is permutation-equivalent to $\bC'$.
The desired conclusions then follow from Theorem \ref{thm-3.1} and Theorem \ref{thm-4.3}.
\end{proof}

We have the following remarks regarding the cyclic code $\bC'$ in Theorem \ref{thm-5.2}.
\begin{itemize}
\item The lower bound $d \geq \delta$ on the minimum distance of the cyclic code $\bC'$ in Theorem \ref{thm-5.2}
is a square-root-like lower bound.
\item When $\mu=1$, the cyclic code $\bC'$ in Theorem \ref{thm-5.2} has parameters
$$
[2(q_1^{2m}-1), q_1^m-1, d \geq q_1^{m}-1].
$$
In this case, the lower bound $d \geq q_1^{m}-1$ is not as good as the square-root bound for $q \geq 4$. But
the actual minimum distance of $\bC'$ could be better than the square-root lower bound. The code example below
can justify this statement.
\end{itemize}

\begin{example}
Let $(q, m, \mu)=(4,3, 1)$. Then $\delta=7$ and the lower bound on the  minimum distance of the cyclic code $\bC'$ in Theorem \ref{thm-5.2} is $7$ only.
In this case, $ \bC_{(q,n,\delta,1, \beta)}$ has parameters $[63, 48, 7]_4$ and
$ \bC_{(q,n,\delta,1, \beta)}^{\perp}$ has parameters $[63, 15, 24]_4$. It then follows from Theorem \ref{thm-5.2}
that the cyclic code $\bC'$ in Theorem \ref{thm-5.2} in this case has parameters
$[126, 63, 14]_4$. The actual minimum distance of $\bC'$ is thus larger than the square-root lower bound.
\end{example}

\section{Self-dual linear codes over ${\bf F}_q$ with a square-root-like lower bound}\label{sec-code2}

Let $q$ be an odd prime power satisfying $q \equiv 1 \pmod{4}$ throughout this section.
Then in the field ${\bf F}_q$, $-1$ is a square.
Below we present the following variant of the construction  of Theorem  \ref{thm-3.1}. \\

\begin{theorem}\label{thm-6.1}
 Let $\lambda \in {\bf F}_q$ with $\lambda^2 =-1$. Let ${\bf C} \subseteq {\bf F}_q^n$ be a dual-containing
 linear code. Then the set
 $${\bf C}_\lambda(\bC)=\{{\bf u}|\lambda{\bf u}+{\bf v}: {\bf u} \in {\bf C}, {\bf v} \in {\bf C}^{\perp}\} \subseteq {\bf F}_q^{2n}$$
 is  a $[2n,n]_q$ self-dual linear code with minimum distance $\min\{d({\bf C}^{\perp}), 2d({\bf C})\}$.
\end{theorem}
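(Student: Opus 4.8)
The plan is to follow the proof of Theorem \ref{thm-3.1} almost verbatim, with the role played there by the characteristic-two identity $1+1=0$ now played by the identity $\lambda^2=-1$ guaranteed by $q\equiv 1\pmod 4$. First I would pin down the dimension. The map $\bC\times\bC^{\perp}\to\bF_q^{2n}$ sending $(\bu,\bv)$ to $\bu|\lambda\bu+\bv$ is clearly $\bF_q$-linear, and it is injective, since $\bu|\lambda\bu+\bv=\bzero$ forces $\bu=\bzero$ and then $\bv=\bzero$. Hence $\dim {\bf C}_\lambda(\bC)=\dim\bC+\dim\bC^{\perp}=k+(n-k)=n$, so the code has the length--dimension pair $[2n,n]_q$ required of a self-dual code.

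Next I would verify self-orthogonality, which is the heart of the matter. For two codewords $\bu_1|\lambda\bu_1+\bv_1$ and $\bu_2|\lambda\bu_2+\bv_2$, expanding the Euclidean inner product gives
$$<\bu_1,\bu_2>+\lambda^2<\bu_1,\bu_2>+\lambda<\bu_1,\bv_2>+\lambda<\bv_1,\bu_2>+<\bv_1,\bv_2>.$$
Here the first two terms cancel because $1+\lambda^2=0$; the two middle terms vanish because $\bu_i\in\bC$ while $\bv_j\in\bC^{\perp}$; and the final term vanishes because the dual-containing hypothesis $\bC^{\perp}\subseteq\bC$ (equivalently $\bC^{\perp}\subseteq(\bC^{\perp})^{\perp}$) makes $\bC^{\perp}$ self-orthogonal, so $<\bv_1,\bv_2>=0$. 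Thus ${\bf C}_\lambda(\bC)\subseteq {\bf C}_\lambda(\bC)^{\perp}$, and combined with the dimension count the code is self-dual.

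For the minimum distance, the key observation is that scaling by the nonzero scalar $\lambda$ preserves Hamming weight, so $\wt(\lambda\bu)=\wt(\bu)$ and the weight of a codeword is $\wt(\bu)+\wt(\lambda\bu+\bv)$, exactly as in an ordinary Plotkin sum up to a weight-preserving relabelling of the left block. I would then repeat the standard two-case argument: if $\bv=\bzero$ and $\bu\neq\bzero$, the weight equals $2\wt(\bu)\ge 2d(\bC)$; if $\bv\neq\bzero$, the triangle inequality gives $\wt(\bu)+\wt(\lambda\bu+\bv)=\wt(\lambda\bu)+\wt(\lambda\bu+\bv)\ge\wt((\lambda\bu+\bv)-\lambda\bu)=\wt(\bv)\ge d(\bC^{\perp})$. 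Both bounds are attained (take $\bu$ of weight $d(\bC)$ with $\bv=\bzero$, and take $\bu=\bzero$ with $\bv$ of weight $d(\bC^{\perp})$), so the minimum distance is exactly $\min\{d(\bC^{\perp}),2d(\bC)\}$.

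I do not expect a genuine obstacle here: the entire argument is the characteristic-two proof of Theorem \ref{thm-3.1} transported to odd characteristic by the substitution of $1+\lambda^2=0$ for $1+1=0$. The only points that deserve a moment's care are (i) deducing the self-orthogonality of $\bC^{\perp}$ from the dual-containing hypothesis rather than assuming it, and (ii) the mild bookkeeping that $\lambda$ rescales the left half into $\lambda\bu$, which is harmless for weights precisely because $\lambda\neq 0$.
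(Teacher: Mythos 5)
Your proof is correct and follows essentially the same route as the paper's: the same inner-product expansion using $1+\lambda^2=0$ together with the self-orthogonality of $\bC^{\perp}$, the same dimension count, and the same Plotkin-style weight analysis with attainment of both bounds. If anything, your two-case split on whether $\bv=\bzero$ is slightly cleaner than the paper's three-case split, since it correctly covers the degenerate subcase $\lambda\bu+\bv=\bzero$, where the codeword weight is only guaranteed to be $\wt(\bv)\ge d(\bC^{\perp})$ rather than $2d(\bC)$.
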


\begin{proof}
For two codewords ${\bf c}_1=[{\bf u}_1|\lambda{\bf u}_1+{\bf v}_1]$ and ${\bf c}_2=[{\bf u}_2|\lambda{\bf u}_2+{\bf v}_2]$ in this code ${\bf C}_\lambda(\bC)$, their inner product is
$$<{\bf c}_1,{\bf c}_2>=(\lambda^2+1)<{\bf u}_1, {\bf u}_2>+\lambda(<{\bf u}_1, {\bf v}_2>+<{\bf u}_2,{\bf v}_1>)+<{\bf v}_1, {\bf v}_2>.$$
Since $\lambda^2+1=0$ and ${\bf C}^{\perp}$ is self-orthogonal, this inner product is zero. It is clear that the dimension of ${\bf C}_\lambda(\bC)$ is
$$\dim({\bf C}_\lambda(\bC))=\dim({\bf C}^{\perp})+\dim({\bf C})=n.$$
Let ${\bf u} \in {\bf C}$ and  ${\bf v} \in {\bf C}^{\perp}$. It is clear
that $({\bf u}|\lambda{\bf u}+{\bf v})$ is the zero vector if and only both $\bu$ and $\bv$ are the zero vector.
Assume that at least one of $\bu$ and $\bv$ is a nonzero vector. We have then the following cases:

\subsubsection*{The case that $\bu=\bzero$ and $\bv \neq \bzero$:}

In this case, one has
$$
\wt({\bf u}|\lambda{\bf u}+{\bf v})=\wt(\bv) \geq d(\bC^\perp).
$$

\subsubsection*{The case that  $\bu\neq \bzero$ and $\bv = \bzero$:}

In this case, one has
$$
\wt({\bf u}|\lambda{\bf u}+{\bf v})= 2\wt(\bu)\geq 2d(\bC).
$$

\subsubsection*{The case that  $\bu \neq \bzero$ and $\bv \neq \bzero$:}

In this case, one has
\begin{eqnarray*}
\wt({\bf u}|\lambda{\bf u}+{\bf v})
&=& \wt(\bu) + \wt(\lambda \bu +\bv) \\
&\geq& 2d(\bC).
\end{eqnarray*}

Clearly, there are two codewords in $\bC_\lambda(\bC)$ with Hamming weight $2d(\bC)$ and $d(\bC^\perp)$.
The desired conclusion on the minimum distance of $\bC_\lambda(\bC)$ then follows.
\end{proof}

\begin{theorem}\label{thm-6.2}
Let $m$ be odd and $q \equiv 1 \pmod{4}$.
Let $\mu $ be a proper divisor of $q^m-1$ and $n=\frac{q^m-1}{\mu} >4$.
Put
$$
\delta =\left\lfloor \frac{q^{\frac{m+1}{2}}-q+1}{\mu} \right\rfloor.
$$
Then the code $\bC_\lambda(\bC_{(q,n,\delta,1, \beta)})$ is a $[2n, n]_q$ self-dual linear code with
$$
d(\bC_\lambda(\bC_{(q,n,\delta,1, \beta)}) )= \min\{d({\bf C}^{\perp}), 2d({\bf C})\} \geq \delta
$$
\end{theorem}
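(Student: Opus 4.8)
The plan is to apply the general construction of Theorem \ref{thm-6.1} to the specific BCH code $\bC_{(q,n,\delta,1,\beta)}$ supplied by Theorem \ref{thm-4.2}, so that the work reduces to verifying that this BCH code satisfies the one structural hypothesis of Theorem \ref{thm-6.1}, namely that it is Euclidean dual-containing. First I would observe that the parameters here are literally those of Theorem \ref{thm-4.2}: $m$ is odd, $\mu$ is a proper divisor of $q^m-1$, $n=(q^m-1)/\mu>4$, and $\delta$ is the same floor quantity. Since $q\equiv 1\pmod 4$ forces $q$ to be an odd prime power, Theorem \ref{thm-4.2} applies verbatim and tells us that $\bC_{(q,n,\delta,1,\beta)}$ is a Euclidean dual-containing BCH code with minimum distance at least $\delta$. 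This is exactly the hypothesis ``$\bC\subseteq\bF_q^n$ is a dual-containing linear code'' required by Theorem \ref{thm-6.1}.

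Next I would check the arithmetic precondition of Theorem \ref{thm-6.1}, namely the existence of $\lambda\in\bF_q$ with $\lambda^2=-1$. This is precisely where the standing assumption $q\equiv 1\pmod 4$ of this section is used: as already noted at the start of Section \ref{sec-code2}, $-1$ is a square in $\bF_q$ whenever $q\equiv 1\pmod 4$, so such a $\lambda$ exists and Theorem \ref{thm-6.1} is applicable. With both the algebraic hypothesis ($\lambda^2=-1$) and the coding-theoretic hypothesis (dual-containment) in place, I would invoke Theorem \ref{thm-6.1} with $\bC=\bC_{(q,n,\delta,1,\beta)}$. That theorem immediately yields that $\bC_\lambda(\bC_{(q,n,\delta,1,\beta)})$ is a $[2n,n]_q$ self-dual linear code whose minimum distance equals $\min\{d(\bC^\perp),2d(\bC)\}$, which is the exact form of the claimed distance formula.

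Finally I would establish the lower bound $\min\{d(\bC^\perp),2d(\bC)\}\geq\delta$. Here $d(\bC)=d(\bC_{(q,n,\delta,1,\beta)})\geq\delta$ by the BCH bound (as recorded in Theorem \ref{thm-4.2}), so $2d(\bC)\geq 2\delta\geq\delta$ trivially. The remaining term $d(\bC^\perp)$ is the distance of the Euclidean dual code, and one would argue that $d(\bC^\perp)\geq\delta$ as well; this is the content one inherits from the dual-containing construction, since the dual is itself a cyclic code of the type treated in Section \ref{sec-BCHcontain}, and the same estimate underlies the identical lower bound in Theorem \ref{thm-5.1}. Combining the two estimates gives $\min\{d(\bC^\perp),2d(\bC)\}\geq\delta$, completing the proof.

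I expect the only genuine obstacle to be pinning down the lower bound on $d(\bC^\perp)$ rather than on $d(\bC)$: the BCH bound controls $d(\bC)$ directly, but the dual distance $d(\bC^\perp)$ requires either an appeal to the structure of the defining set of the dual code or a reference to the analysis already carried out for the Euclidean self-dual case in Theorem \ref{thm-5.1}. Everything else—the self-duality and the dimension $n$—is a direct and routine instantiation of Theorem \ref{thm-6.1}, so the proof should be short once the dual-distance estimate is cited.
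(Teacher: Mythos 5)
Your proposal is correct and follows essentially the same route as the paper, which simply cites Theorem \ref{thm-4.2} for dual-containment and the distance bound and then invokes Theorem \ref{thm-6.1}. The one step you flag as a potential obstacle, $d(\bC^{\perp}) \geq \delta$, is immediate from dual-containment: $\bC^{\perp} \subseteq \bC$ implies every nonzero codeword of $\bC^{\perp}$ lies in $\bC$, so $d(\bC^{\perp}) \geq d(\bC) \geq \delta$, with no further appeal to the defining set of the dual or to Theorem \ref{thm-5.1} needed.
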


\begin{proof}
By Theorem \ref{thm-4.2},  $\bC_{(q,n,\delta,1, \beta)}$ is a Euclidean dual-containing BCH code with minimum distance at least $\delta$. The desired conclusions then follow from Theorem \ref{thm-6.1}.
\end{proof}

We have the following remarks regarding the linear code $\bC_\lambda(\bC_{(q,n,\delta,1, \beta)})$ in Theorem \ref{thm-6.2}.
\begin{itemize}
\item The lower bound $d \geq \delta$ on the minimum distance of
 the code  $\bC_\lambda(\bC_{(q,n,\delta,1, \beta)})$ in Theorem \ref{thm-6.2}
is a square-root-like lower bound.
\item When $\mu=1$, the linear code $\bC_\lambda(\bC_{(q,n,\delta,1, \beta)})$ in Theorem \ref{thm-6.2} has parameters
$$
[2(q^{m}-1), q^m-1, d \geq q^{(m+1)/2}-q+1].
$$
In this case, the lower bound $ d \geq q^{(m+1)/2}-q+1$ is better than the square-root bound as $q \geq 5$ by
definition.
\end{itemize}

\section{A family of self-dual binary cyclic codes with a square-root-like lower bound}\label{sec-Oct16}

In this section, we present an infinite family of self-dual binary codes with a square-root-like lower bound.
We first prove the following theorem.

\begin{theorem}\label{thm-4.1}
Let $\gcd(q,n)=1$.
Let $\bC$ be a cyclic code of length $n$ over $\bF_q$ with generator polynomial
$g(x)$. Let $\bT_g$ be the defining set of $\bC$ with respect to an $n$-th primitive root of unity in
$\bF_{q^m}$, where $m=\ord_n(q)$. If there is no integer $a$ in ${\bf Z}_n$ such that both $a$ and $n-a$ are in the same cyclotomic coset and
 there is no integer $a \in {\bf Z}_n$ such that both $a$ and $n-a$ are in the set ${\bf T}_g$,
 then  $\bC$ is a Euclidean dual-containing cyclic code.
\end{theorem}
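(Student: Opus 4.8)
The plan is to reduce the assertion to the combinatorial dual-containment criterion recorded in Lemma~\ref{lem-oct11Eucl}: the code $\bC$ is Euclidean dual-containing if and only if $\bT_g \cap \bT_g^{-1} = \emptyset$, where $\bT_g^{-1} = \{n - i : i \in \bT_g\}$. So the entire task is to show that the two hypotheses force this intersection to be empty. As a structural preliminary I would note that, because $\gcd(q,n)=1$, the defining set $\bT_g$ is a disjoint union of $q$-cyclotomic cosets and that negation permutes these cosets: from $C_i = \{iq^j \bmod n\}$ one reads off $-C_i = C_{n-i}$. Hence $\bT_g^{-1} = -\bT_g$ is again a union of cyclotomic cosets, and if it met $\bT_g$ at all it would do so in an entire coset.

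Next I would argue by contradiction. Suppose $\bT_g \cap \bT_g^{-1} \neq \emptyset$ and pick $b$ in the intersection. From $b \in \bT_g^{-1}$ we get $b = n - c$ for some $c \in \bT_g$, so $n - b = c \in \bT_g$; together with $b \in \bT_g$ this says that both $b$ and $n - b$ lie in $\bT_g$. Taking $a = b$ now contradicts the second hypothesis, which forbids any $a \in \bZ_n$ with both $a$ and $n - a$ in $\bT_g$. Therefore $\bT_g \cap \bT_g^{-1} = \emptyset$, and Lemma~\ref{lem-oct11Eucl} delivers the conclusion. The first hypothesis is there to clear the degenerate case in which $b$ and $n - b$ sit inside a single self-inverse coset $C_b = C_{n-b}$, so that the two membership facts cannot merely be one coset folding onto itself; ruling out such self-inverse cosets makes the appeal to the second hypothesis unambiguous.

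I expect the one genuinely delicate point---and the step I would scrutinise---to be the bookkeeping at the indices fixed by negation, namely $0$ and, when $n$ is even, $n/2$, together with the precise reading of the phrase ``both $a$ and $n-a$''. One must confirm that such a self-paired index cannot lie in $\bT_g$ undetected: if $0 \in \bT_g$ then $0 \in \bT_g \cap \bT_g^{-1}$, so the second hypothesis has to be understood as covering this case too. Pinning down exactly which hypothesis carries the load, and checking that the two together amount precisely to the negation of the emptiness criterion of Lemma~\ref{lem-oct11Eucl}, is the whole of the subtlety; beyond it the argument is a direct translation of the hypotheses through the coset identity $-C_i = C_{n-i}$, involving no real computation.
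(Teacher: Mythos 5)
Your proof is correct and follows essentially the same route as the paper's: both translate the hypotheses into the statement that $\bT_g$ is disjoint from $\bT_g^{-1}$, you by citing Lemma~\ref{lem-oct11Eucl} and the paper by verifying the equivalent inclusion $\bT_g \subseteq (\bZ_n \setminus \bT_g)^{-1}$ directly. Your observation that the second hypothesis alone carries the load is accurate --- the paper's proof likewise invokes the first hypothesis only at a step where it is not actually needed.
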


\begin{proof}
Suppose that there is no integer $a$ in ${\bf Z}_n$ such that
both $a$ and $n-a$ are in the same cyclotomic coset and there is no integer $a \in {\bf Z}_n$ such that
 both $a$ and $n-a$ are in the set ${\bf T}_g$.
By definition, ${\bf T}_g$ is the defining set of
$\bC$. Recall that $\bC^\perp$ has defining set $(\bZ_n\setminus  \bT_g)^{-1}$.
We would prove that
\begin{eqnarray}\label{eqn-inclusion1}
\bT_q \subset (\bZ_n\setminus  \bT_g)^{-1}.
\end{eqnarray}
Let $a \in \bT_g$. Since there is no integer $a \in {\bf Z}_n$ such that
 both $a$ and $n-a$ are in $\bT_g$, we have $n-a \in \bZ_n \setminus \bT_g$.
 Since there is no integer $a$ in ${\bf Z}_n$ such that
both $a$ and $n-a$ are in the same cyclotomic coset,
it follows from  $n-a \in \bZ_n \setminus \bT_g$ that
$a \in (\bZ_n\setminus  \bT_g)^{-1}$. Hence, the inclusion in \eqref{eqn-inclusion1} holds.
Consequently, $g(x)$ is a divisor of the generator polynomial of
$\bC^\perp$. Hence,
$
\bC^\perp \subseteq \bC.
$
\end{proof}

We are now ready to use Theorem \ref{thm-4.1} to prove the following result.

\begin{theorem}\label{thm-Oct16}
Let $n \equiv 7 \pmod{8}$ be a prime. Let $g(x)$ be the generator polynomial of an odd-like binary
quadratic residue code $\bC$ of length $n$ and let  $h^{\perp}(x)$ denote the generator polynomial
of the dual code $\bC^\perp$.
Let $\bC'$ denote the binary cyclic code of length $2n$ with generator polynomial $g(x)h^{\perp}(x)$.
Then $\bC'$ is self-dual and has parameters
$[2n, n, d \geq \lceil\sqrt{n}\, \rceil ]$.
\end{theorem}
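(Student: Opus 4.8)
The plan is to verify that the binary quadratic residue code $\bC$ satisfies the two hypotheses of Theorem \ref{thm-4.1}, then apply Theorem \ref{thm-3.1} (via the generalized van Lint theorem) to obtain the self-dual cyclic code $\bC'$, and finally invoke the known square-root bound for quadratic residue codes together with a parity argument to sharpen the distance estimate. First I would recall the structure of the $2$-cyclotomic cosets modulo the prime $n$. Since $n \equiv 7 \pmod 8$, the integer $2$ is a quadratic residue modulo $n$, so the set $Q$ of nonzero quadratic residues is a union of $2$-cyclotomic cosets, and likewise the set $N$ of non-residues. The defining set $\bT_g$ of the odd-like QR code $\bC$ is (up to the usual convention) either $Q$ or $N$; say $\bT_g = Q$.

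\textbf{Checking the dual-containing hypotheses.}

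The main point is the condition $n \equiv 7 \pmod 8$, which is precisely what makes $-1$ a \emph{non-residue} modulo such a prime. Concretely, for a prime $n \equiv 3 \pmod 4$ one has $\left(\frac{-1}{n}\right) = -1$, and $n \equiv 7 \pmod 8$ forces $n \equiv 3 \pmod 4$. Consequently the map $a \mapsto n-a$ sends the residue set $Q$ bijectively onto the non-residue set $N$. From this I would derive both hypotheses of Theorem \ref{thm-4.1}: first, no $a$ and $n-a$ lie in the same cyclotomic coset, since any single coset is contained in $Q$ or in $N$ (the cosets partition $Q$ and $N$ separately), while $a$ and $n-a$ always fall on opposite sides; second, no $a$ and $n-a$ both lie in $\bT_g = Q$, again because $n-a \in N$ whenever $a \in Q$. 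Theorem \ref{thm-4.1} then gives $\bC^\perp \subseteq \bC$, i.e. $\bC$ is Euclidean dual-containing.

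\textbf{Applying the constructions and sharpening the bound.}

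Because $n$ is odd and $\bC$ is dual-containing with generator polynomial $g(x)$, the generator polynomial of $\bC^\perp$ factors as $g(x)g_2(x)$ with $g_2(x)\mid x^n+1$; applying Theorem \ref{thm-3.1} and the generalized van Lint Theorem \ref{thm-evanlint} shows that $\Plotkin(\bC,\bC^\perp)$ is permutation-equivalent to the length-$2n$ cyclic code $\bC'$ with generator polynomial $g(x)^2 g_2(x) = g(x)h^\perp(x)$, and that $\bC'$ is self-dual of dimension $n$. For the distance, Theorem \ref{thm-3.1} gives $d(\bC') = \min\{2d(\bC),\, d(\bC^\perp)\}$; the classical square-root bound for quadratic residue codes yields $d(\bC) \geq \sqrt{n}$, and for the extended/even-like dual $d(\bC^\perp) \geq \sqrt{n}$ as well, so $d(\bC') \geq \lceil \sqrt{n}\,\rceil$ after noting the distance is an integer. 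The hard part will be pinning down exactly which QR distance bound applies to $\bC^\perp$ and confirming that the $\min$ does not drop below $\sqrt{n}$; here I expect to use that $2d(\bC) \geq 2\sqrt{n} > \sqrt{n}$ trivially, so the binding term is $d(\bC^\perp)$, and I would appeal to the square-root bound for the (even-like) quadratic residue code that is the dual of the odd-like one, possibly invoking the same bound applied to the related QR code to conclude $d(\bC^\perp) \geq \lceil\sqrt{n}\,\rceil$.
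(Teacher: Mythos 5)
Your proposal follows essentially the same route as the paper: verify the two hypotheses of Theorem \ref{thm-4.1} using that $2$ is a quadratic residue and $-1$ a nonresidue modulo a prime $n \equiv 7 \pmod{8}$, then combine Theorems \ref{thm-evanlint} and \ref{thm-3.1} with the square-root bound for quadratic residue codes. The point you flag as the ``hard part'' --- pinning down a bound on $d(\bC^\perp)$ --- dissolves once dual-containment is established: $\bC^\perp \subseteq \bC$ gives $d(\bC^\perp) \geq d(\bC) \geq \lceil \sqrt{n}\,\rceil$ directly, with no need to invoke a separate square-root bound for the even-like quadratic residue code.
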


\begin{proof}
By definition, the defining set $\bT$ of $\bC$ is either set of quadratic residues or the set of quadratic nonresidues.
Since $n \equiv 7 \pmod{8}$ is a prime, $2$ is a quadratic residue modulo $n$.
All the elements in the $2$-cyclotomic coset $C_1$ are quadratic residues.
Hence all the elements in each $2$-cyclotomic coset $C_i$ are either all quadratic residues or all
quadratic nonresidues. Notice that $-1$ is a quadric nonresidue in $\bF_n$, as $n \equiv 7 \pmod{8}$.
We then deduce that
there is no integer $a$ in ${\bf Z}_n$ such that
both $a$ and $n-a$ are in the same cyclotomic coset and there is no integer $a \in {\bf Z}_n$ such that
 both $a$ and $n-a$ are in the set ${\bf T}$. It then follows from Theorem \ref{thm-4.1} that $\bC^\perp \subset \bC$.

It is well known that $d(\bC) \geq  \lceil\sqrt{n}\, \rceil.$ It then follows from  $\bC^\perp \subset \bC$ that
$$
d(\bC^\perp) \geq d(\bC)  \geq  \lceil\sqrt{n}\, \rceil.
$$
Consequently,
$$
\Plotkin(\bC, \bC^\perp) = \min\{2d(\bC), d(\bC)^\perp \} \geq \lceil\sqrt{n}\, \rceil.
$$
It follows from Theorem \ref{thm-evanlint} that $\bC'$ is permutation-equivalent to $\Plotkin(\bC, \bC^\perp)$.
By Theorem \ref{thm-3.1}, $\bC'$ is self-dual. The desired conclusions then follow.
\end{proof}

Since there are infinitely many primes $n$ of the form $n \equiv 7 \pmod{8}$, these codes in Theorem
\ref{thm-Oct16}
form an infinite family of self-dual binary cyclic codes with a square-root-like
lower bound. The first seven codes in this family have the following parameters:
$$
[14,7,4], \ [46, 23, 8], \ [62, 31, 8], \ [94,47,12], \ [142,71,12], \ [158,79,16], \ [206, 103, 20].
$$

\section{Some other known self-dual linear codes with a square-root-like lower bound}\label{sec-othercodes}

There are a huge number of references about self-dual linear codes (see, e.g.,
\cite{Rains,RSsurvey} and the references therein). Since it is not easy to construct
infinite families of self-dual linear codes with a square-root-like lower bound,
we would mention some known families of such codes which were not reported in
Section \ref{sec-motivation}. Below is a list of them that the authors are aware of.
\begin{itemize}
\item Some of the families of extended binary duadic codes in \cite{LLD,LLQ,TangDing}.
\item Several families of negacyclic codes in \cite{XieChenDingSun}.
\end{itemize}
These infinite families of self-dual linear codes above are not cyclic.

\section{\begin{color}{blue}Conclusions and remarks\end{color}}\label{sec-final}

The contributions of this paper are summarized as follows.

\begin{itemize}
\item The original van Lint construction of  binary cyclic codes was generalized into a
         construction of cyclic codes over $\bF_{2^s}$ (see Theorem \ref{thm-evanlint}). In
         addition, a condition in  the original van Lint construction of  binary cyclic codes
         was removed.
\item An infinite family of Euclidean self-dual cyclic codes over $\bF_q$ for even $q$ with
        a square-root-like lower bound on their minimum distances was constructed (see
        Theorem \ref{thm-5.1}).

        When $\mu=1$ and $q\geq 4$, the cyclic code $\bC'$ in Theorem \ref{thm-5.1} has parameters
$$
[2(q^m-1), q^m-1, d \geq q^{(m+1)/2}-q+1].
$$
In this case, the lower bound $d \geq q^{(m+1)/2}-q+1$ is better than the square-root bound.

         Hence, those codes $\bC'$ in Theorem \ref{thm-5.1} defined by $\mu=1$ and $q\geq 4$ form the first infinite family of Euclidean self-dual cyclic codes whose minimum distances
         are better than the square-root bound in the literature. This makes a breakthrough
         in the construction of self-dual cyclic codes with their minimum distances better than
         the square-root lower bound.

 When $\mu=1$ and $q=2$, the cyclic code $\bC'$ in Theorem \ref{thm-5.1} has parameters
$$
[2(2^m-1), 2^m-1, d \geq 2^{(m+1)/2}].
$$
In this case, the lower bound $d \geq 2^{(m+1)/2}$ is exactly the square-root bound.
This is the first infinite family of self-dual binary cyclic codes with the square-root lower bound
and makes another breakthrough in this direction.

\item  An infinite family of Hermitian self-dual cyclic codes over $\bF_q$ for even $q$ with
        a square-root-like lower bound on their minimum distances was constructed (see
        Theorem \ref{thm-5.2}).

\item A variant of the  $[{\bf u}|{\bf u}+{\bf v}]$ construction of Euclidean self-dual linear codes was
         presented in Theorem \ref{thm-6.1}. With this new construction method, an infinite family of Euclidean self-dual linear codes over $\bF_q$ for $q \equiv 1 \pmod{4}$ with
        a square-root-like lower bound on their minimum distances was constructed in
        Theorem \ref{thm-6.2}.

        When $\mu=1$, the linear code in Theorem \ref{thm-6.2} has parameters
$$
[2(q^{m}-1), q^m-1, d \geq q^{(m+1)/2}-q+1].
$$
In this case, the lower bound $ d \geq q^{(m+1)/2}-q+1$ is better than the square-root bound as $q \geq 5$ by
definition.
\item An infinite family of self-dual binary cyclic codes with parameters $[2n, n, d \geq \lceil\sqrt{n}\, \rceil ]$ was
constructed (see Theorem \ref{thm-Oct16}).
\end{itemize}
The two breakthroughs mentioned above are the main contributions of this paper. \\

\begin{color}{blue} 
The results of this paper showed that there are an infinite family of self-dual binary cyclic codes with the 
square-root lower bound on their minimum distances and an infinite family of self-dual nonbinary cyclic codes with a lower bound 
on their minimum distances better than the square-root lower bound.  Then the following open problem  
becomes interesting. 

\begin{problem} 
Is there an infinite family of self-dual binary cyclic codes with a lower bound 
on their minimum distances better than the square-root lower bound? 
\end{problem}

Theorems \ref{thm-3.1} and \ref{thm-6.1} are very useful tools for constructing self-dual codes.  
To construct a self-dual code with a square-root-like lower bound with Theorem  \ref{thm-3.1} 
or \ref{thm-6.1},  one needs to select or design a proper 
building block, i.e., the underlying dual-containing code $\bC$.  This is the key to success. \\  
\end{color}

In this paper, we did not consider the even $m$ case, as the corresponding Euclidean and Hermitian self-dual
codes are relatively weaker in terms of their minimum distances and lower bounds on their minimum distances,
compared with the codes in the odd $m$ case presented in this paper.

\begin{color}{blue} 
\section*{Acknowledgements} 

The authors are grateful to the reviewers and the associate editor, Dr. Eimear Byrne, for their comments and suggestions that improved the presentation of this paper. 
\end{color}

\end{document}